\documentclass[12pt]{iopart}

\usepackage{iopams}

\expandafter\let\csname equation*\endcsname\relax

\expandafter\let\csname endequation*\endcsname\relax

\usepackage{amsmath}
\usepackage{graphicx}
\usepackage{amssymb}
\usepackage{amsopn}
\usepackage{amsthm}
\usepackage{epstopdf}
\usepackage{microtype}
\usepackage{graphicx}
\usepackage{amssymb}
\usepackage{amsfonts}
\usepackage{epstopdf}
\usepackage[hidelinks,hypertexnames=false]{hyperref}
\usepackage{comment}
\usepackage{enumerate}
\usepackage{float}
\usepackage{tikz}
\usepackage{pgfplots}
\pgfplotsset{compat=1.17}
\usepackage{framed}
\usepackage{xcolor}

\usepackage{soul}

\providecommand{\reals}{\mathbb{R}}

\providecommand{\eps}{\varepsilon}
\providecommand{\diff}{\mathrm{d}}

\providecommand{\x}{\mathbf{x}}

\providecommand{\argmax}{\operatornamewithlimits{\arg\max}}
\providecommand{\argmin}{\operatornamewithlimits{\arg\min}}
\renewcommand{\iota}{\textsf{i}}

\newtheorem{dfn}{Definition}
\newtheorem{prop}{Proposition}
\newtheorem{thm}{Theorem}

\newtheorem{cor}{Corollary}
\theoremstyle{remark}
\newtheorem{rmk}{Remark}

\usepackage[backend=bibtex,style=authoryear,natbib=true, url=false,isbn=false, doi=false, eprint=false]{biblatex} 
\begin{document}

\title[]{On the discretization of the object space in inverse problems with application to cryo-electron microscopy}

\author{G. Mordant$^\dagger$, L. Evans$^{*}$, D. Silva-S{\'a}nchez$^{\dagger}$, P. Cossio$^{*,\circ}$, R. Lederman$^{\dagger}$ }

\address{
$^{*}$ Center for Computational Mathematics, Flatiron Institute, New York, United States \\
$^{\circ}$ Center for Computational Biology, Flatiron Institute, New York, United States \\
$^{\dagger}$ Yale University, CT, USA}
\vspace{10pt}
\begin{indented}
\item[]\today
\end{indented}

\begin{abstract}

In many inverse problems, the aim is to recover a probability distribution
on a latent (or object state) space from indirect, noisy observations. When the observations
can be modelled as noisy samples from the mixing latent distribution, the recovery
problem is a deconvolution problem on the space of probability measures.
A common strategy is to fix a finite set of candidate points and estimate
a weight for each, turning the problem into a finite-dimensional concave
maximum likelihood problem on the simplex. We study the combined effect of
this discretization and of the noise in the context of cryo-electron
microscopy (cryo-EM), where the candidate points are biomolecular conformations
and the weights describe the
relative frequency of each conformation.
Our results pertain to both statistical and algorithmic aspects of the estimator.
We analyze the weight-recovery problem in which the candidate states and
their likelihoods are known. A nearby pair of candidates forces a
near-null direction. More generally, the grid and the noise level impose
a uniform lower bound on the achievable Kullback--Leibler divergence
between observation densities, even with infinite data. The finite-grid
estimator is asymptotically normal when its population target is in the
interior of the simplex; at a boundary target, its limit is a cone-projected
Gaussian. Finally, the exact proximal form of Expectation--Maximization leads
to a global high-noise comparison between an early iterate and a
KL-penalized likelihood, without a basin assumption or a linearization of the
recursion. Tests on synthetic images of the Hsp90 molecule illustrate
the theoretical findings and translate them into practical guidelines for
interpreting reweighted ensembles.

\end{abstract}

\section{Introduction}

An objective of science is describing accurately how and why certain
quantities or objects fluctuate. The prototypical example of interest for
this paper is understanding different conformations of a molecule, i.e., the
different states in which that molecule can be found in the nature and their probabilities. There
usually is a continuum of such states and, if one had direct observations,
the problem would be a distribution-estimation problem. We call this
distribution the \emph{ensemble distribution};  the standard usage in the
mixture-model and empirical-Bayes literature would be mixture distribution, and the space on which it
lives the \emph{latent space}.  For our purposes, the space is rather called the conformational space as latent space bears a different meaning in the cryo-EM community.

In the setting of interest, one does not have direct observations from the conformational space as  data is usually stained by noise or only indirectly
observed due to the imaging projection. Simplifying a bit, each observation is a noisy sample
from the mixing distribution, with Gaussian noise of known variance. The
density of the observations is therefore the convolution of the mixing
distribution with a Gaussian kernel, and recovering the mixing distribution
from the observations is a deconvolution problem on the space of probability
measures. As no parametric form for the mixing distribution usually exists,
various approaches (see Section~\ref{sec: relatedCryo} below for a review) approximate it by a discrete distribution, i.e., a weighted
sum of Dirac masses supported on a finite set of candidate points, as done by \citet{tang2023ensemble} in cryo-EM to cite but one.   The
weights are then estimated from the data, usually by maximum likelihood.
The candidate points can come from domain knowledge, from numerical
simulations such as molecular dynamics (MD), or from a regular grid. In Section~\ref{sec: relatedCryo}, we further discuss the importance of discretization in a variety of schemes.

This paper is about the combined effect of the Gaussian noise and of the
discretization step. Once the support is fixed, the maximum likelihood
problem becomes a finite-dimensional concave optimization on the simplex of
weights, which is computationally convenient.

We argue that the noise level and the geometry of the candidate points
interact in ways a practitioner should know about. How many candidate
points there are, how far apart they sit, and how well they cover the
conformational space, together with the noise level $\sigma$, control three things
at once: the conditioning of the optimization problem, the best
KL approximation of the observation density that can be achieved
even with infinite data, and the
way reconstructed weights relate to the empirical frequencies (``populations'' in the cryo-EM literature) one would
estimate from the data. In a cryo-EM context this means, concretely, that two
candidate structures which are close compared to the noise level can
exchange weight almost freely without changing the likelihood, that an
MD run which fails to explore a region of the landscape
produces a population observation-density KL floor that more data cannot
lower, and that
large reconstructed weights should not be read directly as large
conformational probabilities. Quantifying these statements is the goal of
the paper.

More precisely, we explain
\begin{itemize}
    \item why a nearby pair of candidates forces a nearly flat
    direction for the optimization, at a scale controlled by their spacing relative to the noise;
    \item why the candidate points (obtained for instance from MD), together with the noise level, influence the best achievable
     KL divergence, even with infinite data;
    \item why sparse fitted weights need not correspond to large conformational probabilities;
    \item what the statistical properties of the estimator are.
\end{itemize}

The main theoretical results include the uniform fixed-grid bound of
Theorem~\ref{thm: LowBd}, the global early-EM comparison of
Proposition~\ref{prop: EqReg}, the additional-noise stability result of
Theorem~\ref{thm: NoiseMisspecification}, and the constrained limit
distribution of Theorem~\ref{thm: CLT}.

Besides these statistical aspects, the solver used can subtly influence the outcome.   The Expectation-Maximization (EM) algorithm is a popular way to solve this
optimization problem, even though off-the-shelf convex solvers could also provide a solution. The EM algorithm is easy to implement and has a small computational cost per iteration.

The maximum-likelihood objective is concave. The fitted vector $K\widehat\alpha$ is unique,
and the weights are unique under the tangent-injectivity condition introduced
below. The estimate can nevertheless be sparse or spiky; see 
\citet[Section~3]{kim2020fast} or panel f.\ of Figure~\ref{fig: EM}.

We explain why EM can become slow when components are similar or the
optimum lies on the boundary. From a positive initialization it stays in the
relative interior, and in the high-noise regime its proximal interpretation
makes precise how early stopping can regularize the weights. We also study
practical stopping diagnostics.

\textbf{Organization.}
We start with a more formal description of the problem that we tackle in Section~\ref{sec: Problem}.
Then, Section~\ref{sec: Background} provides additional background: we introduce cryo-EM
(Sec.~\ref{sec: cryo-EM}), the statistical-physics view of conformational
distributions (Sec.~\ref{sec: SimConf}), the ensemble-reweighting scheme of
\citet{tang2023ensemble} and its connection with nonparametric maximum
likelihood (Sec.~\ref{sec: EnsRew}), and related work in cryo-EM
(Sec.~\ref{sec: relatedCryo}). Section~\ref{sec: Obj} analyses the objective
function: we compute its gradient and Hessian and quantify the ill-conditioning
caused by close structures (Sec.~\ref{sec: GradHess}),
establish self-concordance, establish a fundamental KL lower bound when the grid fails to cover the conformational space (Sec.~\ref{sec: LimGrids}),  and close with an exact clusterwise description of how empirical frequencies enter the objective (Sec.~\ref{sec: Replica}). Section~\ref{sec: EMSect} is devoted
to the Expectation--Maximization algorithm used in practice: we explain
when early stopping can induce an implicit regularization, give a proximal-point
interpretation, and discuss stopping criteria. Section~\ref{sec: stats}
studies stability under an additional noise component and a central limit theorem
for the estimated weights~$\hat\alpha$. Section~\ref{sec: sim} illustrates the theoretical findings
through simulations on the Hsp90 molecule. We finalize with a conclusions Section \ref{sec: conclu}. The remaining proofs are deferred
to~\ref{sec: Proofs}. An additional example illustrating theoretical findings, the IgG molecule, is provided in \ref{sec: igg}.

\section{Problem formulation}
\label{sec: Problem}

In this section, we formally introduce the problem and fix notation for the rest of the paper. A mixing distribution with
density $\rho:\reals^d\to\reals$ is given, and $n \in \mathbb{N}$ i.i.d.\ sample points
$\mathfrak{x}_1,\ldots,\mathfrak{x}_n$ are drawn from it. One does not
observe the $\mathfrak{x}_i$ directly, but rather
$y_i=\mathfrak{x}_i+\eps_i$, where the $\eps_i$ are independent Gaussian
random vectors with covariance $\sigma^2 I_d$. This is the weight-recovery
subproblem obtained when the forward operators and candidate structures are
treated as known; we expect the same mechanisms to remain relevant in the full
cryo-EM problem.

Write $p_\sigma$ for the density of $\mathcal{N}(x,\sigma^2 I_d)$.
In the simplified model we consider, each $y_i$ has density
\[
  \int_{\mathbb{R}^d} p_\sigma(y \mid x)\,\rho(x)\,dx,
\]
the convolution of $\rho$ with the Gaussian kernel.

The measure $\rho$ will be estimated by a discrete measure of the form
$\sum_{m=1}^M \alpha_m\delta_{x_m}$, where the points $x_1,\ldots,x_M$ are
given and $\alpha$ lies in the simplex
$\Delta_{M-1}:=\{\alpha\in\reals^M:\alpha_m\geq 0,\ \sum_m\alpha_m=1\}$.
The associated maximum likelihood problem is
\begin{equation}
\label{eq: MainOpt}
\max_{\alpha\in\Delta_{M-1}} F(\alpha),
\qquad
F(\alpha):=\frac1n\sum_{i=1}^n\log\!\left(\sum_{m=1}^M p_\sigma(y_i\vert x_m)\,\alpha_m\right).
\end{equation}
Fixing the points $x_m$ rather than optimizing over them is common practice in integrative structural biology methods that rely on reweighting structural ensembles against experimental data (\cite{bottaro2020integrating}). Finding the optimal $\alpha$
then amounts to reweighting the structures. 
In the running example of this work, each $x_m$ represents a conformation of Hsp90. The latter are indexed by an opening angle $\theta$ and the $\alpha_m$'s would be the relative proportions.
General background on this
application is provided in Section~\ref{sec: Background} and the particular running example is shown on Figure~\ref{fig: EM} and developed in  Section \ref{sec: RunExp}.

\section{Background}
\label{sec: Background}
The goal of this paper is to analyze the properties and limitations of a previously introduced denoising method \citep{tang2023ensemble, evans2026counting} for cryo-EM data.
This method leverages noisy experimental data as well as output of numerical simulations, such as MD, and combines both sources with the objective of retrieving the most accurate representation of the underlying probability density of the molecules.
Though our focus is on the reweighting scheme, our results extend beyond the method analyzed here, and we give related work in Section~\ref{sec: relatedCryo}.

\subsection{Cryo-electron microscopy}
\label{sec: cryo-EM}
Biomolecules undergo changes between various conformations to perform different functions.
An important research problem is to understand the various conformations a biomolecule can take and subsequently infer the underlying biophysical mechanisms. These conformations can be studied through cryo-EM, a 
leading technique for imaging biomolecules at near atomic resolution \citep{nakane2020single}. 
This method has become a fundamental scientific tool: the 2017 Nobel prize in chemistry was awarded to Dubochet, Frank and Henderson for the development of cryo-EM, and the experiment was labeled 
a ``Method to Watch"  by Nature Methods in 2022 \citep{doerr2022dynamic}. 
In cryo-EM, a solution of many copies of biomolecule is flash frozen and imaged with a transmission electron microscope.
Due to fast freezing, the biomolecules remain in a near-native state and the resulting images capture noisy snapshots of various conformations. This contrasts with crystallization-based techniques that restrict the conformational variability.

Analyzing the acquired images is very challenging. 
As the molecules are flash-frozen, each is in a random unknown orientation. Also, one only observes a two-dimensional view of the 3D volume.  Finally, the noise level is extremely high, which makes the problem of recovering the conformation landscape, i.e., the set of possible conformations, more intricate. Usually, the image formation process is modeled as in equation~\eqref{eq: ForwMod}.

Our analysis is motivated by the forward model for heterogeneous cryo-EM datasets. In this forward model, one considers a dataset of $n$ images, where each image $y_i$ is a projection of an (unknown) volume $V_i$ corresponding to the electrostatic potential of the molecule of interest in the $i$-th conformation.   
The  collection of acquired images $\{y_i\}_{i=1}^n$ is then modeled as
\begin{equation}
\label{eq: ForwMod}
y_i =(PSF_i \circ T_i \circ P \circ R_i)V_i +\eta_i, \qquad i \in \{1,\ldots, n\},
\end{equation}
 where $R_i$ is a 3D rotation operator corresponding to the random orientation of the volume $V_i$, $P$ is the 2D projection operator and $T_i$ is a 2D translation operator corresponding to the offset of the projected volume with respect to the center of the image. Both $R_i$ and $T_i$ are specific to each particle image $y_i$ and are unknown. $PSF_i$ is the Point Spread Function operator applied to the projected image and can be thought of as a convolution coming from the fact that the microscope is imperfect and distorts the signal.  Finally, $\eta_i$ is  some random noise, often assumed to be Gaussian, the covariance matrix of which we do not specify for now. 
 For initial processing of an image dataset, $PSF_i, T_i, R_i$ are all unknown in addition to $V_i$ for $i=1,\ldots, n$, and must be estimated, often through alternating minimization using a likelihood similar to~\eqref{eq: imaging_likelihood} \citep{singer2020computational}.

The composition of the various functions appearing in \eqref{eq: ForwMod} is called the forward operator, which we denote   $\operatorname{Img}_{T_i, R_i}(\cdot)$.
In practice, the conditional density $p_\sigma(y_i\vert\operatorname{Img}_{T_i, R_i}(V))$  for a volume $V$, usually is taken to be Gaussian, i.e.,
\begin{equation}
\label{eq: imaging_likelihood}
	p_\sigma\big(y_i\vert \operatorname{Img}_{T_i, R_i}(V)\big) := \big(2\pi\sigma^2\big)^{-N/2} \exp\left(- \frac{\|y_i -\operatorname{Img}_{T_i, R_i}(V)\|^2}{2\sigma^2}\right),
\end{equation}
where $N$ is the number of pixels, and $\sigma$ is the fixed noise level, which we assume is known.
We note that we take the cryo-EM forward operator~\eqref{eq: ForwMod} and likelihood~\eqref{eq: imaging_likelihood}  as our inspiration, but for the analysis we will consider an identity forward operator and Gaussian likelihood with large variance, as explained in Section \ref{sec: EnsRew}.

\subsection{Modeling the conformational ensemble distribution}
\label{sec: SimConf}

 A fundamental idea of statistical physics is that the variations of the configuration of a molecule, i.e., the different conformations should depend on the energy level of the configuration and the temperature. That is, one expects to observe only configurations that are stable and thus have a low energy level. At higher temperatures, as the system gets more excited,  more variability can be expected.
The probability distribution underlying the configurations is modeled as the Boltzmann distribution 
\[
\rho(x) = \frac{1}{Z_0} e^{-\beta \mathcal{H}(x)},
\]
where $\mathcal{H}$ is the molecular Hamiltonian characterizing the energy, $Z_0$ is the normalizing constant and $\beta$ the inverse temperature parameter. Unfortunately, the Hamiltonian is not known exactly, which is why understanding of molecular conformations is still an active topic of research.

In parallel to the experiments and the data they deliver, computational advances from MD \citep{jung2023machine} or AlphaFold \citep{abramson2024accurate} for instance, enable one to predict such conformations. A natural  direction is then to combine the strength of those numerical methods with the datasets available to obtain the most accurate reconstructions possible and eventually improve the understanding of the conformational landscape. This strand of ideas is known as integrative biology, see  \citet{ward2013integrative,bottaro2018biophysical}. This combination is  important, as the computational methods are only providing approximations of the unknown truth and might exhibit biases.

\subsection{Ensemble reweighting as nonparametric maximum likelihood }
\label{sec: EnsRew}

An integrative approach as the one above is put forward in \citet{tang2023ensemble} and subsequent works \citep{silva2026cryo}. Their procedure is the one that we focus on in this paper, even though some of the phenomena we highlight are relevant for other types of analyses. Let us stress that we get rid of the issue of the rotations and projections for the analyses of the present work.

\citet{tang2023ensemble} propose the approach below. For the analysis
conducted in this paper, we work with a simplified version of their setting:
the cryo-EM forward operator of equation~\eqref{eq: ForwMod} is replaced by
the identity, the random rotations, translations and  PSF are discarded, and
the noise is assumed to be isotropic Gaussian with known variance~$\sigma^2$.
This simplification lets us isolate the statistical properties of the
reweighting step itself, independently of the pose-estimation and PSF
calibration problems, and study ill-conditioning, early-stopping
regularization, and spike formation within this subproblem.
We assume that an i.i.d.\ sample of images can be produced, call it
$Y:=\{y_i\}_{i=1}^n$. For simplicity, assume again that the images are
vectorized and live on $\mathbb{R}^N$ where $N$ is the total number of pixels.

Let us introduce the initial candidate probability density of the conformations $\rho_0(x)$, which could come from a model or any educated guess. Then, to incorporate the information contained in the data, a class of functions $h(x; \alpha)$ depending on $\alpha$ is introduced. The $\alpha$ parameter will be learnt so that $h(x; \alpha)\rho_0(x)$ describes more accurately the conformational space. One can then write the conditional likelihood of the data given the reweighting function as
\[
p_\sigma(Y\vert \alpha) = \prod_{i=1}^n \left(\int  p_\sigma(y_i\vert x) h (x; \alpha) \rho_0(x)\diff x \right).
\]
As a parametric form for  $h(x; \alpha)$ is not to be expected,  the authors propose to consider a discrete counterpart of the likelihood, which reads
\begin{equation}
\label{eq: PbInterest}
\tilde p_\sigma(Y\vert \alpha) = \prod_{i=1}^n \left( \sum_{m =1}^M p_\sigma(y_i\vert x_m) \alpha_m \right), 
\end{equation}
where the vector $\alpha$ lies in the $(M-1)$-simplex $\Delta_{M-1}$ and $\{x_m\}_{m=1}^M$ is a representative subset of the possible conformations, for instance a discrete conformational path, which will be obtained from MD simulations. The weight vector $\alpha$ now plays the role of the reweighting density $h(x;\alpha)$ above.
Finally note that 
\begin{equation}
\label{eq: Objective}
	\hat\alpha_n\in \argmax_{\alpha \in \Delta_{M-1} } \tilde p_\sigma(Y\vert \alpha)  = \argmax_{\alpha \in \Delta_{M-1} }\frac1n  \sum_{i=1}^n  \log \left( \sum_{m =1}^M p_\sigma(y_i\vert x_m) \alpha_m \right),
\end{equation}
which is convenient in many cases as noted by the authors.

Let us take a step back and start again from the fact that no parametric form for $h(x,\alpha)$ could be postulated. Indeed, no explicit form of an $\alpha$-parametrized Hamiltonian $\mathcal{H}_\alpha(x)$ is usually available or sufficiently accurate\footnote{Observe that there is no need to learn anything in the case of a conformational landscape for which the Hamiltonian is perfectly known.}, so that a parametrization of the form $h(x; \alpha)\rho_0(x) \propto e^{-\mathcal{H}_\alpha(x)}$ is not feasible. Still, following the principle of maximum likelihood, one could then maximize the log-likelihood over all possible distributions, i.e., 
\begin{equation}
\label{eq: NPMLE}
\max_{q \in \mathcal{P}}\ \frac 1n \sum_{i=1}^n \log \int p_\sigma( y_i \vert u) \diff q(u),
\end{equation}
where  $\mathcal{P}$ is the set of probability distributions supported on $\reals^{N}$. This problem
 is classical in statistics, known as the \emph{nonparametric maximum likelihood estimation problem}, which at least dates back to \citet{robbins1950generalization}. A book-long exposition of the subject is \citet{lindsay1995mixture} and the interested reader might profit from Chapters 1, 5 and 6 in particular.

This problem has been studied much over the past decades and keeps on being an active research field. We refer to the lecture notes by~\citet{ignatiadis2024empirical}.
An important fact is that the objective functional is concave in $q$. As it is an infinite dimensional problem, however, modifications have to be performed to make the problem computationally-friendly. 
Following \citet{laird1978nonparametric}, \citet{jiang2009general} considered the idea of restricting the set of measures in the optimization problem above by fixing the support of $q$ to a (large) discrete set of points, say $\{x_m\}_{m=1}^M$. With the set of points fixed, maximizing over all measures is then equivalent to optimizing the measures weights as is suggested above. 

This is the first takeaway: fixing the support to a discrete grid is a classical and principled way of making the NPMLE problem tractable. In our setting, this choice is not made freely: the grid $\{x_m\}$ is determined by MD simulations, which is precisely what makes the reweighting problem an instance of integrative biology instead of a purely statistical question.

Having to optimize over the weights only is a great advantage: the log-likelihood problem~\eqref{eq: Objective} is concave and finite-dimensional; classical optimization tools can thus be used.
To that end, \citet{koenker2014convex} proposed to solve that problem using duality and an interior point solver, for instance. The more recent algorithm by \citet{kim2020fast} reports better performance in the case $n \gg M$.
Finally, \citet{zhang2024efficient} further improved on the computations.

\subsection{Discretization of conformational space and related work in cryo-EM}
\label{sec: relatedCryo}

Within cryo-EM workflows, the reweighting and discretization updates analyzed here serve as foundational building blocks for several existing algorithms and pipelines. Both in cryoSPARC~\citep{punjani2017cryosparc} and RELION~\citep{scheres2012relion},  two  dominant software tools in the field, the conformational space is discretized, although structures are not given.  In the Flatiron Challenge for Cryo-EM Conformational Heterogeneity \citep{astore2025inaugural}, the participants were required to submit a discrete set of structures, i.e., provide a discrete estimate of the conformational space.
In multi-class 
expectation maximization such as in FREALIGN~\citep{lyumkisLikelihoodbasedClassificationCryoEM2013}, the scheme alternates between the reweighting update and an optimization step for the volumes of 3D classes and corresponds to soft assignment of particles to structures. The relationship to hard assignment is discussed in \cite{evans2026counting}.  
Similarly, the reweighting scheme mentioned above corresponds to an E step in another EM algorithm used to learn both structures and denoise at the same time, see for instance \citet{balanov2025expectation}. Thus, our results inform that scheme as well, even though capturing only part of the subtleties recently put forward. 
We further note that other empirical Bayes problems arise in cryo-EM:~\cite{gilles2025cryo} apply a PCA approach which induces a hierarchical linear model with heteroscedastic noise on latent cryo-EM volumes.

 Related work also considers Bayesian approaches to compare MD structures with cryo-EM images using the same likelihood as~\eqref{eq: PbInterest} \citep{tang2023ensemble, giraldo2021bayesian} or focusing on calculating \eqref{eq: imaging_likelihood} more efficiently when the pose is unknown \cite{dingeldein2025amortized}. More recently,~\cite{mattingly2026measurement} introduce an information-theoretic framework to determine the optimal discretization of conformational space in cryo-EM ensemble reweighting. By maximizing mutual information between structural weights and noisy simulated images, it is shown that measurement noise sets a limit on the ensemble resolution, enabling the selection of representative structures that capture molecular heterogeneity without spatial redundancy.

From the inverse-problems viewpoint, the fixed candidate set is a
discretization of the unknown measure, and the EM iteration count can act as a
smoothing or iterative-regularization parameter; Proposition~\ref{prop: EqReg}
quantifies one high-noise sense in which this occurs. Discretization effects have long been
studied in statistical inverse problems \citep{johnstone1991discretization},
while the Richardson--Lucy literature gives a close precedent for early
stopping \citep{bertero2009image}. Our focus is the interaction of these two
effects in the finite-grid weight problem.

\section{Properties of the MLE objective function}
\label{sec: Obj}

As discussed above, the reweighting problem is an optimization problem. It is therefore fundamental to study the properties of the objective function. 

\subsection{Gradient, Hessian and optimality condition.}
\label{sec: GradHess}
The objective function $F(\alpha)$ has a specific structure that determines 
computational tractability. We now characterize its curvature properties, 
which govern both optimization difficulty and statistical efficiency. Set $K_{ij}:=p_\sigma(y_i\mid x_j)$,
$s(\alpha):=K\alpha$, and $\mathcal T:=1_M^\perp$.

\begin{prop}
\label{prop: GradHessKKT}
For every $\alpha\in\Delta_{M-1}$,
\[
\nabla F(\alpha)=\frac1nK^\top s(\alpha)^{-1},
\]
and
\begin{equation}
\label{eq: Hess}
\nabla^2F(\alpha)
=-\frac1nK^\top\operatorname{diag}\!\big(s_i(\alpha)^{-2}\big)K,
\end{equation}
where the inverse is understood componentwise. Thus $F$ is concave, and it is strictly
concave on the simplex when
\begin{equation}
\label{eq:tangent-injectivity}
\ker(K)\cap\mathcal T=\{0\}.
\end{equation}
In that case the maximizer is unique; otherwise, the fitted vector
$K\hat\alpha$ is still unique.
\end{prop}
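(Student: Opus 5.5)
The computation is direct; I would write $F(\alpha)=\frac1n\sum_i\log s_i(\alpha)$ with $s_i(\alpha)=(K\alpha)_i=\sum_m K_{im}\alpha_m$. On the simplex every $s_i(\alpha)>0$, because the Gaussian kernel entries $K_{im}=p_\sigma(y_i\mid x_m)$ are strictly positive and $\alpha$ has at least one positive entry. Hence $F$ is smooth on an open neighbourhood of $\Delta_{M-1}$. The chain rule gives $\partial_{\alpha_m}F=\frac1n\sum_i K_{im}/s_i(\alpha)$, which is $\frac1n K^\top s(\alpha)^{-1}$. Differentiating once more gives $\partial^2_{\alpha_m\alpha_l}F=-\frac1n\sum_i K_{im}K_{il}/s_i(\alpha)^2$, which is exactly \eqref{eq: Hess}.

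Concavity follows from the quadratic form of the Hessian. For any $v\in\reals^M$,
\[
v^\top\nabla^2F(\alpha)v=-\frac1n\sum_i\frac{(Kv)_i^2}{s_i(\alpha)^2}\le 0 .
\]
The Hessian is therefore negative semidefinite, and $F$ is concave on the convex set $\Delta_{M-1}$.

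For strict concavity on the simplex, only directions tangent to the simplex matter. For $\alpha,\beta\in\Delta_{M-1}$ we have $v=\beta-\alpha\in\mathcal T$. By the display, $v^\top\nabla^2F v=0$ if and only if $Kv=0$. Under \eqref{eq:tangent-injectivity} this forces $v=0$, so the Hessian is negative definite on $\mathcal T$ at every point of the simplex. Consequently $t\mapsto F(\alpha+t(\beta-\alpha))$ has strictly negative second derivative on $[0,1]$ whenever $\beta\neq\alpha$, which is strict concavity along every segment in the simplex.

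The maximizer exists because $F$ is continuous on the compact set $\Delta_{M-1}$. Under \eqref{eq:tangent-injectivity} it is unique by strict concavity: two distinct maximizers would give a strictly larger value at their midpoint.

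For the general case I would argue directly rather than through the Hessian. Write $F(\alpha)=G(K\alpha)$ with $G(s)=\frac1n\sum_i\log s_i$, which is strictly concave on $(0,\infty)^n$. Suppose $\hat\alpha,\tilde\alpha$ are both maximizers with $K\hat\alpha\neq K\tilde\alpha$. Their midpoint lies in the simplex, and strict concavity of $G$ in $s$ gives
\[
F\!\left(\tfrac{\hat\alpha+\tilde\alpha}2\right)>\tfrac12\big(F(\hat\alpha)+F(\tilde\alpha)\big),
\]
which contradicts maximality. So the fitted vector $K\hat\alpha$ is unique.

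The only step needing any care is this last one: the objective should be viewed as a strictly concave function of $s=K\alpha$ rather than of $\alpha$. This also explains why \eqref{eq:tangent-injectivity} is exactly the condition that transfers uniqueness from $s$ back to $\alpha$ along simplex directions.
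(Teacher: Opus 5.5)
Your proof is correct and follows the paper's argument: direct differentiation, the quadratic form $-v^\top\nabla^2F(\alpha)v=\frac1n\sum_i (Kv)_i^2/s_i(\alpha)^2$ for the concavity claims, and strict concavity of $z\mapsto n^{-1}\sum_i\log z_i$ at the midpoint of two maximizers for uniqueness of $K\hat\alpha$. You also make explicit two points the paper leaves implicit: positivity of $s_i$ on a neighbourhood of the simplex, and existence of a maximizer by compactness.
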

Note that condition~\eqref{eq:tangent-injectivity} can hold only if $M-1\leq n$. Moreover, it holds that
\begin{equation}
\label{eq:gradient-simplex-identity}
\langle\nabla F(\alpha),\alpha\rangle
=1.
\end{equation}
Hence the KKT conditions are
\begin{equation}
\label{eq: KKT}
\nabla F(\hat\alpha)-1_M+\hat\mu=0, 
\qquad \hat\mu\geq0,
\qquad \hat\alpha\geq0,
\qquad \hat\mu_j\hat\alpha_j=0,
\qquad 1_M^\top\hat\alpha=1.
\end{equation}
Equivalently, $[\nabla F(\hat\alpha)]_j=1$ if $\hat\alpha_j>0$, while $[\nabla F(\hat\alpha)]_j\leq1$ if $\hat\alpha_j=0$.
The following proposition quantifies the loss of curvature when two structures are close. Define the smallest tangent curvature by
\[
\lambda_{\min}^{\mathcal T}\!\bigl(-\nabla^2F(\alpha)\bigr)
:=
\min_{\substack{v\in\mathcal T\\ \|v\|=1}}
v^\top[-\nabla^2F(\alpha)]v.
\]

\begin{prop}
\label{prop: IllCond}
For distinct candidate indices $k$ and $\ell$, let $F_h$, $h\in\mathbb R^d$,
denote the objective in~\eqref{eq: MainOpt} obtained by replacing $x_k$ with
$x_\ell+h$, while keeping the data, $\sigma$, and the other candidate points
fixed. For the Gaussian kernel,
\[
\sup_{\alpha\in\Delta_{M-1}}
\lambda_{\min}^{\mathcal T}\!\bigl(-\nabla^2F_h(\alpha)\bigr)
=O(\|h\|^2),
\qquad h\to0.
\]
\end{prop}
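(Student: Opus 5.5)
The plan is to bound the tangent minimum from above by the Rayleigh quotient at a single explicit test direction. We choose the direction in which the two nearly coincident candidates exchange mass,
\[
v:=\tfrac{1}{\sqrt2}\,(e_k-e_\ell)\in\mathcal T,\qquad \|v\|=1 .
\]
By the definition of $\lambda_{\min}^{\mathcal T}$, for every $\alpha$ we have $\lambda_{\min}^{\mathcal T}(-\nabla^2F_h(\alpha))\le v^\top[-\nabla^2F_h(\alpha)]v$. It therefore suffices to bound this quadratic form by $C\|h\|^2$ uniformly in $\alpha\in\Delta_{M-1}$.

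Write $K^{(h)}$ for the kernel matrix of $F_h$, so that $K^{(h)}_{ik}=p_\sigma(y_i\mid x_\ell+h)$. Using the Hessian formula \eqref{eq: Hess} of Proposition~\ref{prop: GradHessKKT},
\[
v^\top[-\nabla^2F_h(\alpha)]v=\frac1{2n}\sum_{i=1}^n\frac{\bigl(p_\sigma(y_i\mid x_\ell+h)-p_\sigma(y_i\mid x_\ell)\bigr)^2}{s_i^{(h)}(\alpha)^2}.
\]
The numerator is controlled by the mean value theorem. We have $\nabla_x p_\sigma(y\mid x)=p_\sigma(y\mid x)\,(y-x)/\sigma^2$, and this is bounded by a constant $L_i$ (depending on $y_i,x_\ell,\sigma$) for all $x$ in the unit ball around $x_\ell$; in fact it is globally bounded. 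Hence for $\|h\|\le1$ each numerator is at most $L_i^2\|h\|^2$.

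The step needing care is the denominator, since the supremum runs over the whole simplex and includes its boundary. We use that $s_i^{(h)}(\alpha)=\sum_m K^{(h)}_{im}\alpha_m$ is a convex combination, so $s_i^{(h)}(\alpha)\ge\min_m K^{(h)}_{im}$. Every Gaussian kernel value is strictly positive. The only $h$-dependent entry is $K^{(h)}_{ik}$, and by continuity it is at least $\tfrac12 p_\sigma(y_i\mid x_\ell)$ for $\|h\|$ small. Therefore
\[
s_i^{(h)}(\alpha)\ge c_i:=\min\Bigl\{\min_{m\neq k}p_\sigma(y_i\mid x_m),\ \tfrac12 p_\sigma(y_i\mid x_\ell)\Bigr\}>0,
\]
uniformly in $\alpha$ and in all sufficiently small $h$.

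Combining the two bounds gives
\[
\sup_{\alpha\in\Delta_{M-1}}\lambda_{\min}^{\mathcal T}\bigl(-\nabla^2F_h(\alpha)\bigr)\le\frac{1}{2n}\sum_i\frac{L_i^2}{c_i^2}\,\|h\|^2,
\]
which is the claimed $O(\|h\|^2)$. The constant depends on the data, $\sigma$ and the fixed candidates, but not on $\alpha$ or $h$.
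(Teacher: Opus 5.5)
Your proposal is correct and follows essentially the same route as the paper: the Rayleigh quotient at $v=(e_k-e_\ell)/\sqrt2$, the Hessian formula \eqref{eq: Hess}, a Lipschitz bound $|K^{(h)}_{ik}-K^{(h)}_{i\ell}|\le C\|h\|$, and a uniform positive lower bound on $s_i^{(h)}(\alpha)$. Your convex-combination argument, $s_i^{(h)}(\alpha)\ge\min_m K^{(h)}_{im}$, simply spells out the paper's appeal to ``smoothness and positivity of the Gaussian kernel.''
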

This implies that weight can be shifted between two close structures with negligible
effect on the objective.

Under the high-noise scaling $Y_i=\mu_i+\sigma Z_i$, with
$Z_i\stackrel{\mathrm{iid}}{\sim}\mathcal N(0,I_d)$ and $\mu_i$, $x_\ell$,
and the remaining candidate points fixed, the same left-hand side is
$O_p(\|h\|^2/\sigma^2)$ as $\sigma\to\infty$, uniformly for $h$ in a fixed
neighborhood of zero.

\subsection{Self-concordance of the objective}

Importantly, the objective function exhibits a special structure called self-concordance. Self-concordance is a particularly convenient property for the analysis of the Newton algorithm, see for example~\citet{boyd2004convex}.

\begin{dfn}
\label{dfn: SelfConcordance}
Let $\Omega\subseteq\mathbb R^M$ be open and convex. A convex,
three-times differentiable function $f:\Omega\to\mathbb R$ is called
$M_f$-self-concordant if
\[
|D^3f(x)[u,u,u]|
\leq
M_f\bigl(D^2f(x)[u,u]\bigr)^{3/2}
\]
for every $x\in\Omega$ and $u\in\mathbb R^M$. The case $M_f=2$ is the
standard normalization.
\end{dfn}

\begin{prop}
\label{prop: Self-Conc}
Let
\[
\Omega:=\big\{\alpha\in\mathbb R^M:
K_{i,\bullet}^\top\alpha>0,\ i=1,\ldots,n\big\}.
\]
The negative objective $f(\alpha):=-F(\alpha)$ is
$2\sqrt n$-self-concordant on $\Omega$. Equivalently, $-nF$ is standard
self-concordant. Its Hessian is positive definite on the simplex tangent
space exactly under~\eqref{eq:tangent-injectivity}.
\end{prop}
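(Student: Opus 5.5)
Write $a_i:=K_{i,\bullet}^\top u$ and $s_i:=s_i(\alpha)=K_{i,\bullet}^\top\alpha$. For $\alpha\in\Omega$ we have $s_i>0$ for all $i$. The set $\Omega$ is an open convex set, being a finite intersection of open half-spaces. On it, $f(\alpha)=-\frac1n\sum_i\log s_i(\alpha)$ is smooth. The plan is to differentiate along a line: set $g(t):=f(\alpha+tu)=-\frac1n\sum_i\log(s_i+ta_i)$. Then
\[
D^2f(\alpha)[u,u]=g''(0)=\frac1n\sum_{i=1}^n\frac{a_i^2}{s_i^2},\qquad
D^3f(\alpha)[u,u,u]=g'''(0)=-\frac2n\sum_{i=1}^n\frac{a_i^3}{s_i^3}.
\]
The Hessian formula is consistent with \eqref{eq: Hess} in Proposition~\ref{prop: GradHessKKT}. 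Convexity of $f$ follows from $g''\ge0$.

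Put $r_i:=a_i/s_i$. The inequality to prove is
\[
\frac2n\Bigl|\sum_i r_i^3\Bigr|\le M_f\Bigl(\frac1n\sum_i r_i^2\Bigr)^{3/2}.
\]
The key step is the elementary bound $|\sum_i r_i^3|\le \sum_i|r_i|^3\le(\sum_i r_i^2)^{3/2}$, which is the inclusion $\ell^2\subset\ell^3$ with norm one. Substituting gives
\[
|D^3f|\le\frac2n\Bigl(\sum_i r_i^2\Bigr)^{3/2}=\frac2n\,n^{3/2}\Bigl(\frac1n\sum_i r_i^2\Bigr)^{3/2}=2\sqrt n\,\bigl(D^2f[u,u]\bigr)^{3/2},
\]
so $M_f=2\sqrt n$. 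The same computation applied to $-nF=-\sum_i\log s_i$ gives $|D^3|\le2(\sum r_i^2)^{3/2}$, which is standard self-concordance. Equivalently, one can quote the general scaling rule: if $h$ is standard self-concordant, then $h/c$ is $2\sqrt c$-self-concordant. Since $-\log$ of an affine function is standard self-concordant and sums preserve this, $-nF$ is standard, and dividing by $n$ yields the constant $2\sqrt n$.

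For the last claim, the tangent Hessian form is $v^\top[-\nabla^2F(\alpha)]v=\frac1n\sum_i (K v)_i^2/s_i^2$ for $v\in\mathcal T$. Since every $s_i>0$, this vanishes if and only if $Kv=0$. Hence positive definiteness on $\mathcal T$ holds exactly when $\ker(K)\cap\mathcal T=\{0\}$, which is \eqref{eq:tangent-injectivity}.

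There is no real obstacle. The only points needing care are the domain and the constant. The domain $\Omega$ must be open, so that the simplex points with some $\alpha_m=0$ but $s_i>0$ are covered: they lie in $\Omega$ because $K$ has positive Gaussian entries. The constant must be tracked through the normalization $1/n$, which is where the factor $\sqrt n$ comes from.
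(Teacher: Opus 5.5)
Your proof is correct and matches the paper's argument. Both compute the directional derivatives $\frac1n\sum_i z_i^2$ and $-\frac2n\sum_i z_i^3$ with $z_i=K_{i,\bullet}^\top u/K_{i,\bullet}^\top\alpha$, use $|\sum_i z_i^3|\le(\sum_i z_i^2)^{3/2}$ to get the constant $2\sqrt n$, rescale by $n$ for standard self-concordance, and note that the Hessian form vanishes exactly when $Ku=0$. Your remarks on the openness and convexity of $\Omega$ and on the scaling rule for self-concordance constants are correct details that the paper leaves implicit.
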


\subsection{Fundamental limits of grid-based estimation}
\label{sec: LimGrids}

We first quantify the population likelihood gap created when the candidate
points cannot match the target moments.

For $\alpha\in\Delta_{M-1}$, set
$\mathrm P_M^\alpha:=\sum_m\alpha_m\delta_{x_m}$.
At the population level,
\[
\int (p_\sigma*\rho)(y)\log (p_\sigma*\mathrm P_M^\alpha)(y)\,\diff y
=\int (p_\sigma*\rho)(y)\log (p_\sigma*\rho)(y)\,\diff y
-\operatorname{KL}\!\left(p_\sigma*\rho\,\middle\|\,
p_\sigma*\mathrm P_M^\alpha\right).
\]
Thus, the population maximum likelihood minimizes this KL divergence.

For
$\kappa=(\kappa_1,\ldots,\kappa_d)\in\mathbb N_0^d$, put
$|\kappa|=\sum_{i=1}^d\kappa_i$ and
$x^\kappa=\prod_{i=1}^d x_i^{\kappa_i}$.
 We further place ourselves in the high noise case, i.e., one should think of $\sigma^2$ large with respect to the signal in the images, which is what one expects in cryo-EM.

As we have seen, maximizing the objective function is equivalent to minimizing the Kullback--Leibler divergence. Even in the case of infinite data, 
the latter cannot be zero unless 
\[
\rho \in \left\{\sum_{m=1}^M \alpha_m \delta_{x_m} : \alpha \in \Delta_{M-1}\right\}.
\]

 Let us also define
\begin{equation}
\label{eq: NumMom}
\begin{aligned}
\mathfrak m:=\sup\bigg\{\mathcal K\in\mathbb N_0:\ &
\exists\alpha\in\Delta_{M-1}\ \text{such that}\\[-2pt]
&\sum_{m=1}^M\alpha_m x_m^\kappa
=\int x^\kappa\,\diff\rho(x)
\ \text{for every }|\kappa|\leq\mathcal K
\bigg\}\in\mathbb N_0\cup\{+\infty\}.
\end{aligned}
\end{equation}

More intuitively, $\mathfrak m$ is the largest degree up to which all moments of $\rho$ can be matched simultaneously by $\mathrm P_M^\alpha$ for a suitable choice of $\alpha$.
Under the bounded-support assumption below, $\mathfrak m=+\infty$ if
and only if $\rho=\mathrm P_M^\alpha$ for some $\alpha$; this follows from
compactness of the simplex and moment determinacy on compact sets. The
theorem concerns the remaining case.

\begin{thm}
\label{thm: LowBd}
Suppose that $\rho$ has bounded support and let $\mathfrak m<\infty$ be
defined by~\eqref{eq: NumMom}. Then, for constants $c>0$ and
$\sigma_0<\infty$ depending only on $\rho$ and the candidate points,
\begin{equation}
\label{eq:uniform-kl-lower-bound}
\inf_{\alpha\in\Delta_{M-1}}
\operatorname{KL}(p_\sigma*\rho\|p_\sigma*\mathrm P_M^\alpha)
\geq
c\,\sigma^{-2(\mathfrak m+1)},
\qquad \sigma\geq\sigma_0.
\end{equation}
\end{thm}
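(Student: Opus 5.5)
Since KL divergence is hard to bound below directly, I would first reduce it to a Hellinger distance and then use Fourier analysis. Fix $\alpha\in\Delta_{M-1}$ and write $\nu:=\rho-\mathrm P_M^\alpha$, a signed measure of total mass zero supported in a fixed ball $B_R$ containing $\operatorname{supp}\rho$ and all $x_m$. Since $\operatorname{KL}\geq 2H^2$ and $H^2\geq \frac14\|\cdot\|_{L^1}^2$ by Pinsker-type inequalities, I will aim at a lower bound on the $L^2$ distance with a Gaussian weight. That weight comes from Cauchy--Schwarz: for any density $g$,
\[
\Bigl(\int |f|\Bigr)^2\geq \Bigl(\int f^2/g\Bigr)^{-1}\Bigl(\int |f|\,\cdot\Bigr)^2
\]
is the wrong direction. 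Instead I will use the direct bound $\operatorname{KL}(P\|Q)\geq \frac12\|P-Q\|_{L^1}^2$ together with a test-function lower bound on $\|p_\sigma*\nu\|_{L^1}$:
\[
\|p_\sigma*\nu\|_{L^1}\geq \sup_{\|\phi\|_\infty\leq1}\Bigl|\int \phi\,(p_\sigma*\nu)\Bigr| .
\]
The plan is to build test functions $\phi$ that read off the moments of $\nu$.

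\textbf{Moment extraction.} Take $\phi(y)=\psi(y/\sigma)$ with $\psi$ bounded. Then $\int\phi\,(p_\sigma*\nu)=\int (\tilde\psi_\sigma)\,d\nu$ with $\tilde\psi_\sigma(x)=\mathbb E\,\psi(x/\sigma+Z)$. Taylor expanding in $x/\sigma$ for $x\in B_R$ gives
\[
\tilde\psi_\sigma(x)=\sum_{\kappa}\frac{\sigma^{-|\kappa|}}{\kappa!}\,x^\kappa\,\mathbb E\,\partial^\kappa\psi(Z),
\]
and after Gaussian integration by parts $\mathbb E\,\partial^\kappa\psi(Z)=\mathbb E[\psi(Z)He_\kappa(Z)]$. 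Hence
\[
\int\phi\,(p_\sigma*\nu)=\sum_{|\kappa|\geq \mathfrak j}\frac{\sigma^{-|\kappa|}}{\kappa!}\,c_\kappa(\psi)\,\nu(x^\kappa),
\]
where $\mathfrak j$ is the lowest order at which $\nu$ has a nonzero moment.

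\textbf{Uniformity in $\alpha$ (main obstacle).} The difficulty is that $\mathfrak j$ depends on $\alpha$ and can be smaller than $\mathfrak m+1$. Moreover, the lower moments can be nonzero but tiny, so one cannot simply say the relevant order is $\mathfrak m+1$. I would handle this by compactness. Define
\[
\delta(\alpha):=\max_{|\kappa|\leq \mathfrak m+1}|\nu(x^\kappa)|.
\]
This is continuous in $\alpha$ and strictly positive everywhere on $\Delta_{M-1}$ by the definition of $\mathfrak m$ (no $\alpha$ matches all moments of degree $\leq\mathfrak m+1$). Hence $\delta\geq\delta_0>0$ on the compact simplex. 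Let $k\leq \mathfrak m+1$ be the smallest degree with $\max_{|\kappa|=k}|\nu(x^\kappa)|\geq \eta_k$, for a decreasing sequence of thresholds $\eta_0\gg\eta_1\gg\dots\gg\eta_{\mathfrak m+1}=\delta_0/\text{const}$. These thresholds are chosen inductively so that lower-order moments, all below their thresholds, contribute at most half the degree-$k$ term once $\sigma\geq\sigma_0$. Choose $\psi$ as a bounded, smoothly truncated version of the Hermite polynomial $He_\kappa$ attaining this maximum. Then $c_\kappa(\psi)\approx\kappa!$, while the other same-degree $c_{\kappa'}$ are small, by approximate orthogonality; if needed, I would take $\psi$ to be a combination of truncated Hermites that projects onto the degree-$k$ moment vector. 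The higher-order tail is $O(\sigma^{-k-1}R^{k+1})$, and $|\nu(x^\kappa)|\leq 2R^{|\kappa|}$ bounds it uniformly. This yields $\|p_\sigma*\nu\|_{L^1}\geq c'\sigma^{-k}\geq c'\sigma^{-(\mathfrak m+1)}$ for $\sigma\geq1$.

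Squaring via Pinsker then gives the claimed bound $c\,\sigma^{-2(\mathfrak m+1)}$. The constants depend only on $R$, $\mathfrak m$ and $\delta_0$, which are determined by $\rho$ and the candidate points. The delicate part is the threshold bookkeeping: with the lower-moment thresholds fixed first, $\sigma_0$ must be large enough that the lower-order terms are dominated by the degree-$k$ term. The factor $\sigma^{-j}$ works against us here, since lower degrees carry larger powers of $\sigma^{-1}$. I would resolve this by choosing $\psi$ orthogonal to all Hermite polynomials of degree $<k$, so those terms vanish exactly. Truncation only introduces errors of order $e^{-cL^2}$ in the truncation level $L$, which can be made smaller than $\eta_k\sigma^{-k}$ by taking $L\sim\sqrt{\log\sigma}$; the resulting $\psi$ is then bounded by a power of $\log\sigma$. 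If that logarithmic loss is unacceptable, I would instead use exact orthogonality with bounded $\psi$, constructed by a finite-dimensional linear-algebra choice among bounded functions.
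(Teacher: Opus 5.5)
Your argument does close, but only through the fallback in your last sentence. The main line as written does not give the stated rate.

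First, the threshold step fails. With fixed thresholds, a lower-degree moment just below $\eta_j$ still contributes up to order $\eta_j\sigma^{-j}$, which dominates $\eta_k\sigma^{-k}$ as $\sigma\to\infty$. So no choice of $\sigma_0$ makes it subordinate, as you notice yourself.

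Second, the truncated-Hermite repair with $L\sim\sqrt{\log\sigma}$ makes $\|\psi\|_\infty$ grow like $(\log\sigma)^{k/2}$. After normalizing $\phi$ and applying Pinsker you get $\sigma^{-2(\mathfrak m+1)}$ divided by a power of $\log\sigma$, which is weaker than the theorem.

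The exact-orthogonality construction is what is needed. It takes one line and makes the threshold hierarchy and the ``smallest degree'' superfluous.
\begin{itemize}
\item Fix a smooth, compactly supported $\chi\geq0$ that is positive on the unit ball. The Gram matrix $G_{\kappa\kappa'}:=\mathbb E[\chi(Z)He_\kappa(Z)He_{\kappa'}(Z)]$, for $|\kappa|,|\kappa'|\leq\mathfrak m+1$, is positive definite, because no nonzero polynomial vanishes on a ball.
\item Hence any prescribed vector $t$ of low-order coefficients is realized by $\psi=\chi\sum_{\kappa'}(G^{-1}t)_{\kappa'}He_{\kappa'}$, with $\|\psi\|_\infty\leq C\|t\|$.
\item Take $\kappa^*$ maximizing $|\nu(x^\kappa)|$ over $|\kappa|\leq\mathfrak m+1$. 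Set $t_{\kappa^*}=\kappa^*!\,\operatorname{sgn}(\nu(x^{\kappa^*}))$ and $t_\kappa=0$ otherwise. Then every term of degree at most $\mathfrak m+1$ except one vanishes.
\item This gives $\int\phi\,(p_\sigma*\nu)\geq\delta_0\sigma^{-(\mathfrak m+1)}-C\sigma^{-(\mathfrak m+2)}$ for $\sigma\geq\max\{1,R\}$. The tail is controlled by $|\mathbb E[\psi(Z)He_\kappa(Z)]|\leq\|\psi\|_\infty\sqrt{\kappa!}$ and $|\nu(x^\kappa)|\leq 2R^{|\kappa|}$.
\item Only finitely many pairs $(\kappa^*,\text{sign})$ occur, so all constants are uniform in $\alpha$.
\end{itemize}
The Hellinger and Cauchy--Schwarz detour at the start can simply be deleted; Pinsker is all you use.

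With that fix, your route is genuinely different from the paper's. The paper argues as follows:
\begin{itemize}
\item It bounds $\operatorname{KL}$ below by squared Hellinger, and that by $\|p_\sigma*\rho-p_\sigma*\mathrm P_M^\alpha\|_2^2/(4B_\sigma)$.
\item It uses Plancherel to write this $L^2$ distance as a Gaussian-weighted integral of $|\Delta_\alpha(z/\sigma)|^2$.
\item It Taylor-expands the characteristic-function difference to order $\mathfrak m+1$.
\item It invokes equivalence of norms for polynomials of degree at most $\mathfrak m+1$ in $L^2(B_1)$.
\end{itemize}
Since each degree-$k$ coefficient block carries a factor $\sigma^{-k}\geq\sigma^{-(\mathfrak m+1)}$, that norm equivalence rules out cross-degree cancellation in one stroke. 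This is exactly the difficulty your thresholds were trying to manage, and your Gram-matrix step is the dual form of the same finite-dimensional fact.

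What your route buys is a sharper intermediate statement: $\|p_\sigma*\rho-p_\sigma*\mathrm P_M^\alpha\|_{L^1}\geq c\,\sigma^{-(\mathfrak m+1)}$, uniformly in $\alpha$. The paper's $L^2$/Hellinger bound yields only order $\sigma^{-2(\mathfrak m+1)}$ in $L^1$. Your route also avoids the $\sigma^{-d}$ versus $B_\sigma$ bookkeeping, since Pinsker is scale-free. The paper's route, in turn, needs no test-function construction at all.
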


The theorem shows that an incomplete grid leaves a nonzero approximation
error, even with infinite data, in the forward KL divergence between the true
observation density and the best convolved candidate mixture.

 Equation~\eqref{eq:uniform-kl-lower-bound} also reveals that a higher noise level $\sigma$ reduces the lower bound. This does not mean that the estimate gets better with higher noise level. As $\sigma\to\infty$, the signal is progressively lost and different weight vectors become increasingly difficult to distinguish.
One could say that ``if you squint your eyes enough, everything looks the same". Seen differently, the objective function gets flat.

 The moment-matching picture in Theorem~\ref{thm: LowBd} is closely related to
the low-SNR expansions of \citet{katsevich2023likelihood} and
\citet{fan2024maximum}, who study Gaussian mixture log-likelihoods with
unknown centers.
Our setting differs: the candidate points are fixed,
and the inference is over the simplex of weights. The same
moment-matching obstruction reappears, now in terms of which moments of
$\rho$ the fixed grid can reproduce.
For a fixed pair of Gaussian-smoothed measures,
\citet{chen2021asymptotics} obtain exact high-noise KL asymptotics. The point
of Theorem~\ref{thm: LowBd} is different: its one-sided bound is uniform over
the simplex, so even weights allowed to depend on $\sigma$ cannot remove the
moment obstruction imposed by the fixed grid.

The practical implication is the following. 
As there are
$\binom{d+\mathfrak m}{d}-1$ nonconstant coordinate moments through order
$\mathfrak m$ and since an $M$-point measure has $M-1$ free weights, a generic
moment vector requires, for fixed $d$, $M$ of order $\mathfrak m^d$. This necessary
dimension-counting condition might not be a sufficient condition for a fixed grid.
It further holds in the case of infinite data. However, as the conditioning deteriorates with  structures close to one another, a smaller number of structures might deliver better results in finite sample applications. Initial work has confirmed this in the integrative structural biology setting, where \citet{mattingly2026measurement} sub-sample structures through clustering procedures that stabilize quantities related to the Hessian.

\subsection{Empirical frequencies and candidate geometry}
\label{sec: Replica}

The optimizer can lie on the boundary and be spiky. We next consider a
correctly specified on-grid model and describe how empirical frequencies and
candidate geometry enter the objective.

To understand the role of empirical frequencies, condition on labels
$z_i\in\{1,\ldots,M\}$ and suppose that the observations are independent with
$Y_i\mid z_i=k\sim\mathcal N(x_k,\sigma^2I_d)$. Set
$N_k:=\#\{i:z_i=k\}$ and  $R_x:=\max_{k,j}\|x_j-x_k\|$.

\begin{prop}[Clusterwise representation]
\label{prop: Replica}
Assume that $M,d$ and the candidate centers are fixed, and fix $\sigma>0$.
If, as $n\to\infty$, $N_k/n\to\pi_k>0$, then, conditionally on the labels,
uniformly in $\alpha\in\Delta_{M-1}$,
\begin{equation}
\label{eq:clusterwise-kl-representation}
F(\alpha)
=-\frac d2\{1+\log(2\pi\sigma^2)\}
-\sum_{k=1}^M\frac{N_k}{n}
\operatorname{KL}\!\left(p_\sigma(\cdot\mid x_k)\| \
p_\sigma*\mathrm P_M^\alpha\right)
+O_p(n^{-1/2}).
\end{equation}
\end{prop}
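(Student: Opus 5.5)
Conditionally on the labels, I would split the sum defining $F(\alpha)$ into clusters:
\[
F(\alpha)=\sum_{k=1}^M \frac{N_k}{n}\cdot\frac{1}{N_k}\sum_{i:z_i=k}\log s_\alpha(Y_i),
\qquad s_\alpha(y):=(p_\sigma*\mathrm P_M^\alpha)(y)=\sum_m\alpha_m p_\sigma(y\mid x_m).
\]
Within cluster $k$ the $Y_i$ are i.i.d.\ $\mathcal N(x_k,\sigma^2I_d)$. Their population mean is
\[
\mathbb E_k\log s_\alpha(Y)=\int p_\sigma(y\mid x_k)\log p_\sigma(y\mid x_k)\,\diff y-\operatorname{KL}\bigl(p_\sigma(\cdot\mid x_k)\|s_\alpha\bigr).
\]
The negative entropy of $\mathcal N(x_k,\sigma^2 I_d)$ equals $-\tfrac d2\{1+\log(2\pi\sigma^2)\}$. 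This does not depend on $k$, and $\sum_k N_k/n=1$, so it gives exactly the constant in \eqref{eq:clusterwise-kl-representation}. It therefore remains to show
\[
\sup_{\alpha\in\Delta_{M-1}}\Bigl|\frac1{N_k}\sum_{i:z_i=k}\bigl(\log s_\alpha(Y_i)-\mathbb E_k\log s_\alpha(Y)\bigr)\Bigr|=O_p(N_k^{-1/2})
\]
for each $k$. Since $N_k/n\to\pi_k>0$, we have $N_k^{-1/2}=O(n^{-1/2})$, and since $M$ is fixed the error terms add up to $O_p(n^{-1/2})$.

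The main step is the uniform (empirical-process) bound. I would write $Y=x_k+\sigma Z$ and use the elementary sandwich
\[
\min_m\log p_\sigma(y\mid x_m)\le \log s_\alpha(y)\le \max_m\log p_\sigma(y\mid x_m).
\]
Together with $\|y-x_m\|^2\le 2\|y-x_k\|^2+2R_x^2$, this gives an envelope $|\log s_\alpha(y)|\le C(1+\|y-x_k\|^2)$ uniformly in $\alpha$, and this envelope has all moments under the Gaussian. Next I would bound the gradient in $\alpha$. It is $\partial_{\alpha_m}\log s_\alpha(y)=p_\sigma(y\mid x_m)/s_\alpha(y)$, which can blow up near the boundary of the simplex. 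So I would use a ratio bound instead:
\[
\frac{p_\sigma(y\mid x_m)}{p_\sigma(y\mid x_j)}\le \exp\bigl(\|y-x_k\|R_x/\sigma^2+R_x^2/\sigma^2\bigr).
\]
Since $s_\alpha\ge\min_j p_\sigma(y\mid x_j)$, this gives a Lipschitz bound in $\alpha$ (in $\ell^1$) with envelope $L(y)=\exp(c\|y-x_k\|)$, which is square-integrable under the Gaussian. The class $\{\log s_\alpha:\alpha\in\Delta_{M-1}\}$ is therefore Lipschitz in a finite-dimensional parameter with an $L^2$ envelope. Standard bracketing results (e.g.\ van der Vaart, Example 19.7 together with Donsker's theorem and the maximal inequality) then make the class Donsker, so the centred supremum is $O_p(N_k^{-1/2})$.

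I expect the main obstacle to be controlling $\log s_\alpha$ and its $\alpha$-derivative uniformly near the boundary of the simplex. The sandwich and ratio bounds above are what handle it: both rely only on the fact that all $M$ kernels are Gaussians with the same covariance and centers within distance $R_x$ of one another. A simpler alternative, avoiding Donsker theory, would be a chaining or $\varepsilon$-net argument on $\Delta_{M-1}$. On a net of size $\varepsilon^{-(M-1)}$ with $\varepsilon=n^{-1/2}$, the Lipschitz envelope (whose empirical mean is $O_p(1)$) bounds the discretization error by $O_p(n^{-1/2})$. However, a union bound over the net would cost a $\sqrt{\log n}$ factor, so I would prefer the bracketing route to obtain the clean $O_p(n^{-1/2})$ rate.
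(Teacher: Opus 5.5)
Your proposal is correct and follows essentially the same route as the paper. Both arguments split $F$ by cluster and identify the constant as the Gaussian negative entropy. Both then control the centred within-cluster averages uniformly in $\alpha$ using the min/max sandwich envelope $C(1+\|Z\|^2)$ and the $\ell^1$-Lipschitz bound whose constant is the ratio $\max_m p_\sigma/\min_m p_\sigma\leq Ce^{C\|Z\|}$, followed by the standard maximal inequality for a finite-dimensional Lipschitz class; you simply make the ratio constants and the bracketing reference explicit where the paper leaves them implicit.
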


\begin{rmk}[High-noise consequence]
The first-order case of \citet[Theorem~2.1]{katsevich2023likelihood},
applied with truth
$\delta_{x_k}$ and candidate measure $\mathrm P_M^\alpha$, gives, uniformly
in $k$ and $\alpha$,
\begin{equation}
\label{eq:clusterwise-kl-high-noise}
\operatorname{KL} \big (p_\sigma(\cdot\mid x_k)\|
p_\sigma*\mathrm P_M^\alpha\big)
=\frac{\|\sum_j\alpha_jx_j-x_k\|^2}{2\sigma^2}
+O_{M,d}\big(R_x^4\sigma^{-4}\big),\qquad \sigma\geq R_x.
\end{equation}
\end{rmk}

Writing $\bar x_n:=\sum_k(N_k/n)x_k$, combining the two displays gives, for
fixed $\sigma\geq R_x$ and up to terms independent of $\alpha$,
\[
F(\alpha)
=-\frac{1}{2\sigma^2}
\left\|\sum_j\alpha_jx_j-\bar x_n\right\|^2
+O_{M,d}(R_x^4\sigma^{-4})+O_p(n^{-1/2}).
\]
Thus the $\sigma^{-2}$ term is maximized by the weights whose barycenter is
$\bar x_n$. Empirical frequencies determine this barycenter; the candidate
geometry determines which weights share it.

 \section{Expectation-Maximization and early-stopping regularization}
 \label{sec: EMSect}
There is a long tradition of solving mixture problems with the expectation-maximization algorithm. The history goes back at least 50 years, with application in fields such as astrophysics, medical imaging, finance, and information theory, see for example~\citet{lucy1974iterative, vardi1993, cover1984}. We refer to \citet{redner1984mixture,balakrishnan2017statistical} for a complete introduction, the underlying motivation and further background. 

The EM algorithm updates the weight vectors

\begin{equation}
\label{eq: Upd}
\alpha_j^{(t+1)}
=\frac1n\sum_{i=1}^n
\frac{K_{ij}\alpha_j^{(t)}}
{\sum_{m=1}^M K_{im}\alpha_m^{(t)}},
\qquad t=0,1,\ldots .
\end{equation}
Write $u:=1_M/M$ and initialize at $\alpha^{(0)}=u$. We use $T$ for the
number of iterations. Define
\begin{equation}
\label{eq: Resp}
r_m(y;\alpha):=\frac{\alpha_mp_\sigma(y\mid x_m)}
{\sum_{\ell=1}^M\alpha_\ell p_\sigma(y\mid x_\ell)},\qquad
\Phi_n(\alpha):=\frac1n\sum_{i=1}^nr(y_i;\alpha), 
\end{equation}
so that the recursion reads $\alpha^{(t+1)}=\Phi_n(\alpha^{(t)})$.
This nonlinear recursion remains particularly challenging to study. Also note that $r$ are usually called \emph{responsibilities}.

This algorithm, in spite of its great popularity in practice, can be slow in some settings. Indeed,  \citet[p.~64]{lindsay1995mixture} stated: 
``However, one key feature of the algorithm is that it commonly displays a
very slow linear rate of convergence, where the rate constant is related to the
amount of information in the missing portion of the data. If the components
are similar in their densities, then the convergence is extremely slow. The
convergence will also be slow when the maximum likelihood solution requires
some of the weight parameters to be zero, because the algorithm can never
reach such a boundary point.''

 As hinted at in the Introduction, we aim at raising awareness in the community that there can be a regularization phenomenon through early stopping, as the EM might not be run long enough in the case of very small or zero optimal weights.  This might however be a feature for the applications of interest.  Recall that other fast convex-optimization methods exist and were discussed in Section~\ref{sec: EnsRew}; they can produce the same spiky boundary solutions in our examples (see Figure~\ref{fig: EM} panel f.). The stopping time along the EM path matters, as the spikiness of the converged fit is both a blessing and a curse. The converged fit indeed is adaptive, but might also introduce non-physical spikes. We believe that a properly early-stopped EM algorithm might provide the best of both worlds.

As detailed in Section~\ref{sec: RunExp}, we use the Hsp90 system and a simplified cryo-EM simulation setup as an example to recover a distribution over the chain opening angle $\theta$ (shown in blue in Figure~\ref{fig: EM}). That function is discretized and its atoms are used as the candidates to reweight. The figure follows EM towards the convex-optimization fit produced by SCS. The latter is sparser than the displayed target, which is not desirable in this example.

\begin{figure}[htbp!]
\begin{center}
\includegraphics[width=0.99\textwidth]{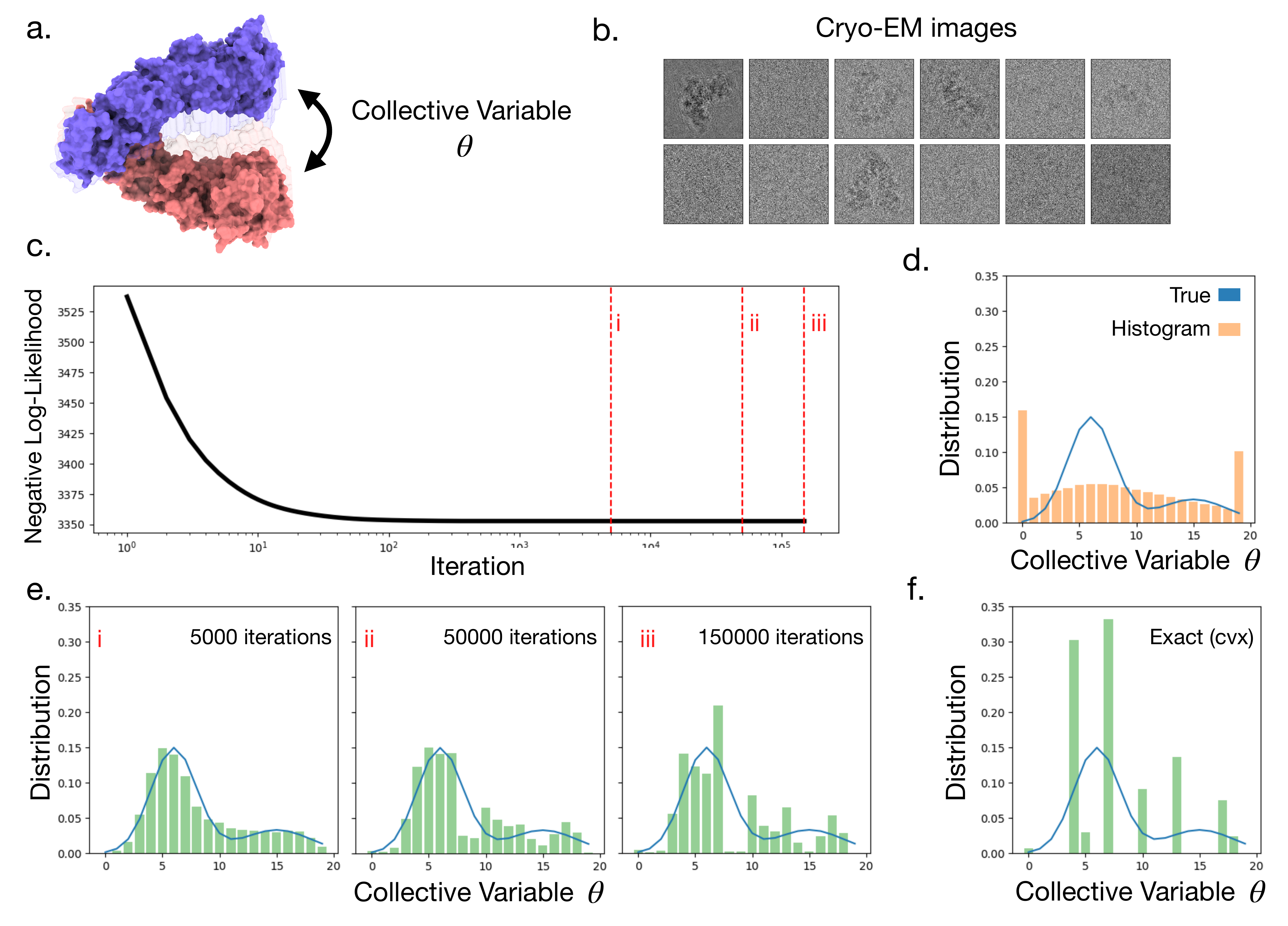}
\caption{\label{fig: EM}
The estimation of the weights as a function of the EM iteration,  illustrated on the cryo-EM Hsp90 example from Section~\ref{sec: sim}, with $n=100{,}000$ images and $M=20$ candidates, where the distribution is shown as a function of the chain opening collective variable $\theta$. 
The blue curve is the displayed truncated target density; normalized grid weights are used when a weight divergence is reported.
{\textbf a.} Molecule Hsp90 of interest for the cryo-EM simulations.  The motion simulated is the opening and closing of the angle between the two chains, illustrated by the transparent models and collective variable $\theta$.
{\textbf b.} Example of simulated cryo-EM images. 
{\bf c.} Evolution of a scaled negative log-likelihood as a function of the number of iterations (log scale).
{\bf d.} Estimation of the frequency (population) of each structure based on the number of nearest neighbors. 
{\bf e.} Estimated weights at iteration number 5000, 50,000 and 150,000, respectively, of the EM algorithm. The sparsity pattern evolves towards the SCS fit in panel~f.
{\bf f.} Convex-optimization fit produced by CVXPY/SCS with absolute and relative tolerances $10^{-5}$.
}
\end{center}
\end{figure}


Remark that the update can be rewritten

\begin{equation}
\label{eq: Updt}
\alpha^{(t+1)} =\alpha^{(t)} +\alpha^{(t)}\odot\left(\nabla F(\alpha^{(t)})-1_M\right),
\end{equation}
which is a diagonally scaled multiplicative gradient step.  At the first step, as $\alpha=1_M/M$, this is an ordinary gradient step of
size $1/M$ in the simplex tangent direction
$\nabla F(1_M/M)-1_M$.
Remark further that once an entry of the vector $\alpha$ is zero, it cannot become positive again.

\subsection{The proximal point algorithm view on early stopping regularization}

Another way of seeing the algorithm is as a proximal-point algorithm, as in
\citet{chretien2008algorithms}; see also \citet{keys2019proximal}. With the
responsibilities defined in \eqref{eq: Resp}, set
\[
Q(\alpha\mid\beta):=\frac1n\sum_{i,j}r_j(y_i;\beta)
\log(\alpha_jK_{ij}),
\qquad
\mathcal D(\alpha,\beta):=\frac1n\sum_i
\operatorname{KL}\big(r(y_i;\beta)\big\| r(y_i;\alpha)\big).
\]
We use the standard extended-value conventions $0\log0=0$,
$a\log0=-\infty$ for $a>0$, and the corresponding infinite-KL convention.

\medskip \medskip 

\begin{prop}
\label{prop: Prox}
The EM update is the unique maximizer of $Q(\alpha\mid\alpha^{(k)})$ and,
for a constant $C_k$ independent of $\alpha$,
\begin{equation}
\label{eq: Prox}
Q(\alpha\mid\alpha^{(k)})
=F(\alpha)-\mathcal D(\alpha,\alpha^{(k)})+C_k.
\end{equation}
Thus EM is an exact proximal-point iteration with a KL penalty between the
responsibilities at two successive weight vectors.
\end{prop}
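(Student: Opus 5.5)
The plan has two parts: first derive the identity \eqref{eq: Prox} from the definition of the responsibilities, then maximize $Q(\cdot\mid\alpha^{(k)})$ over the simplex directly, since $Q$ separates across coordinates.

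\textbf{The identity.} Fix $\beta=\alpha^{(k)}$ with $\beta$ in the relative interior; this holds for all $k$ when starting from $u$ and $K_{ij}>0$. For each $i$ and $\alpha$ with $s_i(\alpha)>0$, the definition \eqref{eq: Resp} gives
\[
\log(\alpha_jK_{ij})=\log s_i(\alpha)+\log r_j(y_i;\alpha).
\]
Multiply by $r_j(y_i;\beta)$, sum over $j$, and use $\sum_j r_j(y_i;\beta)=1$. This gives
\[
\sum_j r_j(y_i;\beta)\log(\alpha_jK_{ij})=\log s_i(\alpha)+\sum_j r_j(y_i;\beta)\log r_j(y_i;\alpha).
\]
Next, add and subtract the entropy term $\sum_j r_j(y_i;\beta)\log r_j(y_i;\beta)$. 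This turns the last sum into $-\operatorname{KL}(r(y_i;\beta)\|r(y_i;\alpha))$ plus a quantity depending only on $\beta$. Averaging over $i$ yields \eqref{eq: Prox}, with
\[
C_k=\frac1n\sum_{i,j}r_j(y_i;\beta)\log r_j(y_i;\beta).
\]

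\textbf{Boundary cases.} The one delicate step is when some $\alpha_j=0$. Since $\beta$ is interior and $K_{ij}>0$, every $r_j(y_i;\beta)>0$. Both sides then equal $-\infty$ under the stated conventions: $Q$ contains $r_j\log 0$, and the KL term is $+\infty$ because $r_j(y_i;\alpha)=0$ while $r_j(y_i;\beta)>0$. If $\beta$ itself has zero entries, the corresponding terms vanish on both sides by $0\log 0=0$, and the same algebra goes through coordinatewise on the support of $\beta$.

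\textbf{Maximization.} Write
\[
Q(\alpha\mid\beta)=\sum_j w_j\log\alpha_j+\text{const},\qquad w_j:=\frac1n\sum_i r_j(y_i;\beta)\ge0,\qquad \sum_j w_j=1.
\]
By Gibbs' inequality,
\[
\sum_j w_j\log\alpha_j\le\sum_j w_j\log w_j
\]
on the simplex, with equality iff $\alpha=w$. Indeed, the difference equals $-\operatorname{KL}(w\|\alpha)$, which is strictly negative unless $\alpha_j=w_j$ wherever $w_j>0$. Summing to one then forces $\alpha_j=0$ elsewhere. So the unique maximizer is $w=\Phi_n(\beta)$, which is exactly the update \eqref{eq: Upd}. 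Combining this with the identity shows that EM maximizes $F(\alpha)-\mathcal D(\alpha,\alpha^{(k)})$, which is the proximal-point interpretation.
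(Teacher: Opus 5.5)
Your proposal is correct and follows the same route as the paper. Both reduce $Q(\cdot\mid\alpha^{(k)})$ to $\sum_j\alpha_j^{(k+1)}\log\alpha_j$ plus an $\alpha$-independent constant, whose unique simplex maximizer is $\alpha^{(k+1)}$, and both obtain \eqref{eq: Prox} by expanding the KL divergence in $\mathcal D$. You add details the paper leaves implicit: the explicit constant $C_k=\frac1n\sum_{i,j}r_j(y_i;\alpha^{(k)})\log r_j(y_i;\alpha^{(k)})$, the Gibbs-inequality argument for uniqueness, and the extended-value boundary cases.
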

Observe that the curvature of $Q$ in coordinate $j$ is
$-\alpha_j^{(k+1)}/\alpha_j^2$; it therefore becomes large near the boundary
when $\alpha_j^{(k+1)}>0$. This explains why the EM algorithm slows down. 

\begin{figure}[htbp!]
\begin{center}
\begin{tikzpicture}[scale=.75]
    \begin{axis}[
        view={135}{25},
        axis lines=middle,
        xlabel={},
        ylabel={},
        zlabel={},
        domain=0:1,
        samples=100,
        zmin=0, zmax=1.1,
        ymin=0, ymax=1.1,
        xmin=0, xmax=1.1
    ]
    \coordinate (A) at (1, 0, 0);
    \coordinate (B) at (0, 1, 0);
    \coordinate (C) at (0, 0, 1);
    
    \filldraw[fill=blue!20, draw=black, opacity=0.5] (A) -- (B) -- (C) -- cycle;
    
    \coordinate (center) at (1/3, 1/3, 1/3);
    \coordinate (centerb) at (0.362, 0.319, 0.320);
    \coordinate (end) at (0.9, 0.1, 0);
    \coordinate (endb) at (0.844 ,0.118 ,0.040);
   
    \coordinate (arrow_point) at  (0.475, 0.295, 0.1);
    
    \addplot3[red, thick] coordinates {
        (1/3, 1/3, 1/3)
        (0.475, 0.295, 0.1)
        (0.62, 0.19, 0.10)
        (0.8, 0.16, 0.045)
        (0.9, 0.1, 0)
    };
    
    \addplot3[only marks, mark=*, color=blue, mark size=3pt] coordinates {(1/3, 1/3, 1/3)};
    \addplot3[only marks, mark=*, color=green, mark size=3pt] coordinates {(0.9, 0.1, 0)};
    \addplot3[only marks, mark=*, color=red, mark size=2pt] coordinates {(0.475, 0.295, 0.1)};
    
    \draw[->, thick ] (arrow_point) -- (centerb);
    \draw[->, thick] (arrow_point) -- (endb);
    \end{axis}
\end{tikzpicture}
\end{center}
\caption{\label{fig: EM_visualized} Visualization of the trajectory of the EM algorithm (red) on the simplex (transparent blue), when initialized at the equal weight vector (blue point). The solution of the optimization problem is the point in  green. At each step, the algorithm balances the two terms in Equation~\eqref{eq: Prox}, which are represented by the black arrows. }
\end{figure}

Figure~\ref{fig: EM_visualized} illustrates this proximal update on a
three-dimensional simplex. On that figure, the red broken line represents the iteration path, the blue dot the starting vector of weights, the green dot the optimizer and the two black arrows represent the two competing terms in equation~\eqref{eq: Prox}.

There is no tunable step size: the nonnegative penalty changes with the
iterate and vanishes at the current one. This makes the implicit regularization of the scheme hard to parse in full generality.
The high-noise comparison in Proposition~\ref{prop: EqReg} below
makes the role of the iteration count as a regularization parameter precise.
To establish that result, let $Y_i^{(\sigma)}=\mu_i+\sigma Z_i$, with
$Z_i\stackrel{\mathrm{iid}}{\sim}\mathcal N(0,I_d)$. Recall $R_x$ and set
\[
A_x:=\max_{i,j}\|x_j-\mu_i\|,
\qquad
a_\sigma:=\frac{R_x}{\sigma}+\frac{A_xR_x}{\sigma^2}.
\]
For an integer $T\geq0$, let $\widetilde\alpha_{\sigma,T}$ be the unique
maximizer of
\[
F(\alpha)-\frac1{T+1}\operatorname{KL}(u\|\alpha)
\quad\text{over }\Delta_{M-1}.
\]

\begin{prop}[Early EM as KL regularization]
\label{prop: EqReg}
Fix $n,M,d$, the means $\mu_i$, and the candidate grid, with
$R_x>0$. If $T_\sigma$ is deterministic and
$(T_\sigma+1)a_\sigma\to0$, then
\begin{equation}
\label{eq:empirical-em-regularization-rate}
\frac{\|\alpha^{(T_\sigma+1)}-
\widetilde\alpha_{\sigma,T_\sigma}\|_2}
{\|\alpha^{(T_\sigma+1)}-u\|_2}
=O_p\bigl((T_\sigma+1)a_\sigma\bigr)=o_p(1).
\end{equation}
\end{prop}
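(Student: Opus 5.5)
Both sides are compared through a common first-order (linearized) object in the high-noise regime: the EM path after $T+1$ steps and the KL-penalized maximizer are each shown to equal $u+\frac{T+1}{M}P_{\mathcal T}g+\text{(small)}$ for the same tangent vector $g$, with the error relative to the displacement of order $(T+1)a_\sigma$.

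\textbf{Step 1: the gradient at $u$ is small.} Write $K_{ij}=p_\sigma(y_i\mid x_j)$. Factor out the common term $\exp(-\|y_i-x_1\|^2/(2\sigma^2))$ from each row. The ratio $K_{ij}/K_{i1}=\exp\{\langle y_i-x_1,x_j-x_1\rangle/\sigma^2-\|x_j-x_1\|^2/(2\sigma^2)\}$ has exponent
\[
\langle \mu_i-x_1,x_j-x_1\rangle/\sigma^2+\langle Z_i,x_j-x_1\rangle/\sigma+O(R_x^2/\sigma^2),
\]
which is $O_p(a_\sigma)$ uniformly in $i,j$, since $n$ is fixed and $\max_i\|Z_i\|=O_p(1)$. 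Hence the responsibilities satisfy $r_j(y_i;\alpha)=\alpha_j(1+e_{ij}(\alpha))$ with
\[
e_{ij}(\alpha)=\ell_{ij}-\textstyle\sum_m\alpha_m\ell_{im}+O_p(a_\sigma^2),
\]
where $\ell_{ij}$ is the centered log-ratio, of size $O_p(a_\sigma)$. This holds uniformly over $\alpha$ in the simplex. Consequently $g:=\nabla F(\cdot)-1_M=\bar\ell-\langle\alpha,\bar\ell\rangle 1_M+O_p(a_\sigma^2)$, with $\bar\ell_j:=\frac1n\sum_i\ell_{ij}$. The key point is that $\nabla F-1_M$ is $O_p(a_\sigma)$ and Lipschitz in $\alpha$ with constant $O_p(a_\sigma)$.

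\textbf{Step 2: unroll EM.} By \eqref{eq: Updt}, $\alpha^{(t+1)}=\alpha^{(t)}\odot(1_M+g(\alpha^{(t)}))$. By induction, as long as $(t+1)a_\sigma\to0$, every coordinate satisfies $\alpha^{(t)}_j=\frac1M(1+O_p(ta_\sigma))$, so the iterate stays in the interior and near $u$. Summing the steps and using the Lipschitz bound from Step 1 gives
\[
\alpha^{(T+1)}-u=\frac{T+1}{M}\,P_{\mathcal T}\bar\ell+O_p\bigl((T+1)^2a_\sigma^2\bigr)+O_p\bigl((T+1)a_\sigma^2\bigr).
\]
The error contains two pieces: the drift of the gradient along the path, of size $\sum_t O_p(a_\sigma\cdot ta_\sigma)$, and the multiplicative correction $\alpha_j^{(t)}-1/M$ times the gradient.

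\textbf{Step 3: the regularized maximizer.} For $\widetilde\alpha$, the objective $F-\frac1{T+1}\operatorname{KL}(u\|\cdot)$ has a barrier that keeps the maximizer in the interior. The KKT condition therefore reads
\[
\nabla F(\widetilde\alpha)+\frac{1}{(T+1)M}\,\widetilde\alpha^{-1}=\lambda 1_M .
\]
Pairing with $\widetilde\alpha$ and using \eqref{eq:gradient-simplex-identity} gives $\lambda=1+\frac1{T+1}$. Solving,
\[
\widetilde\alpha_j=\frac{1}{M\,\bigl(1+(T+1)(1-[\nabla F(\widetilde\alpha)]_j)\bigr)} .
\]
A fixed-point/contraction argument then does the work. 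The right-hand side maps a neighbourhood of $u$ of radius $C(T+1)a_\sigma/M$ into itself, and it is a contraction because its Lipschitz constant is $O_p((T+1)a_\sigma)\to0$. This yields the expansion
\[
\widetilde\alpha-u=\frac{T+1}{M}P_{\mathcal T}\bar\ell+O_p\bigl((T+1)^2a_\sigma^2\bigr).
\]
Uniqueness of the maximizer is given in the statement, since the KL term is strictly concave.

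\textbf{Step 4: ratio and main obstacle.} Subtracting the two expansions bounds the numerator by $O_p((T+1)^2a_\sigma^2)$. The hard part, which I expect to be the main obstacle, is the lower bound on the denominator: $\|\alpha^{(T+1)}-u\|\gtrsim (T+1)\|P_{\mathcal T}\bar\ell\|/M$, together with $\|P_{\mathcal T}\bar\ell\|\gtrsim a_\sigma$ in probability. Otherwise the ratio could blow up even though both differences are tiny.

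I would try to establish this as follows. The term $\langle \bar Z,x_j-x_1\rangle/\sigma$ in $\bar\ell$ is a nondegenerate Gaussian vector on the span of $\{x_j-x_1\}$, and this span is nonzero because $R_x>0$. So its projection onto $\mathcal T$ is of exact order $R_x/\sigma$ with probability tending to one, i.e. $\|P_{\mathcal T}\bar\ell\|^{-1}=O_p(\sigma/R_x)$. The deterministic $\mu$-part and the quadratic part have order at most $A_xR_x/\sigma^2$; they can only add to the displacement except on events whose probability vanishes by Gaussian anti-concentration. On this basis I will state the denominator as $\asymp_p (T+1)a_\sigma$, or more carefully as $(T+1)\|P_{\mathcal T}\bar\ell\|$, and the ratio bound follows. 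If anti-concentration fails in the regime where the $A_xR_x/\sigma^2$ term dominates, I would refine the error terms to be relative to $\|P_{\mathcal T}\bar\ell\|$ rather than to $a_\sigma$.
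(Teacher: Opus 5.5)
Your proposal is correct and is essentially the paper's argument. Uniform $O_p(a_\sigma)$ control of the log-likelihood ratios gives a sup bound and a tangent-Lipschitz bound of order $a_\sigma$ for the EM increment $\alpha\odot(\nabla F(\alpha)-1_M)$. Unrolling the recursion and using the stationarity equation of the penalized problem both place the two points within $O_p((T_\sigma+1)^2a_\sigma^2)$ of $u$ plus $(T_\sigma+1)$ times the increment at $u$; your extra linearization of that increment costs only $O_p((T_\sigma+1)a_\sigma^2)$. The denominator is then bounded below through the almost surely nonzero Gaussian vector $(\langle\bar Z,x_j-\bar x\rangle)_j$, as in the paper.

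Two simplifications are available. First, your closed form for $\widetilde\alpha_j$ is exactly the paper's identity $\widetilde\alpha-u=(T_\sigma+1)\,\widetilde\alpha\odot(\nabla F(\widetilde\alpha)-1_M)$. So the uniform bound on $\nabla F-1_M$ over the simplex localizes $\widetilde\alpha$ at once, and no contraction argument is needed. As stated, your map does not preserve the simplex, and its derivative in the direction $1_M$ is of order $T_\sigma+1$ rather than $(T_\sigma+1)a_\sigma$. Second, your Step~4 fallback is moot: with the grid and the $\mu_i$ fixed, $a_\sigma\sim R_x/\sigma$, so the $A_xR_x/\sigma^2$ term never dominates.
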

Thus, $(T_\sigma+1)$ EM updates are, to first order, equivalent to maximizing the regularized objective
$F(\alpha)-(T_\sigma+1)^{-1}\operatorname{KL}(u\|\alpha)$.

\subsection{Early stopping and variability of the estimates}

Theorem~\ref{thm: CLT} describes the local fluctuations of the
finite-grid MLE, while  Proposition~\ref{prop: EqReg}  compares an
early-stopped EM iterate with a penalized optimizer.
The two effects are distinct: sampling changes the estimator, whereas
early stopping changes how well the objective is optimized. 
Intuitively, with less iterates, the iterates must remain closer to the uniform initialization and are therefore less variable but biased. 
On the other hand, a fully solved optimization problem comes with a high variability because of the flatness of the objective when candidate structures are close. 
Completely understanding and quantifying this phenomenon rigorously remains an interesting challenge.

\subsection{Stopping criteria} 

\begin{figure}[htbp!]
\begin{center}
\includegraphics[width=0.99\textwidth]{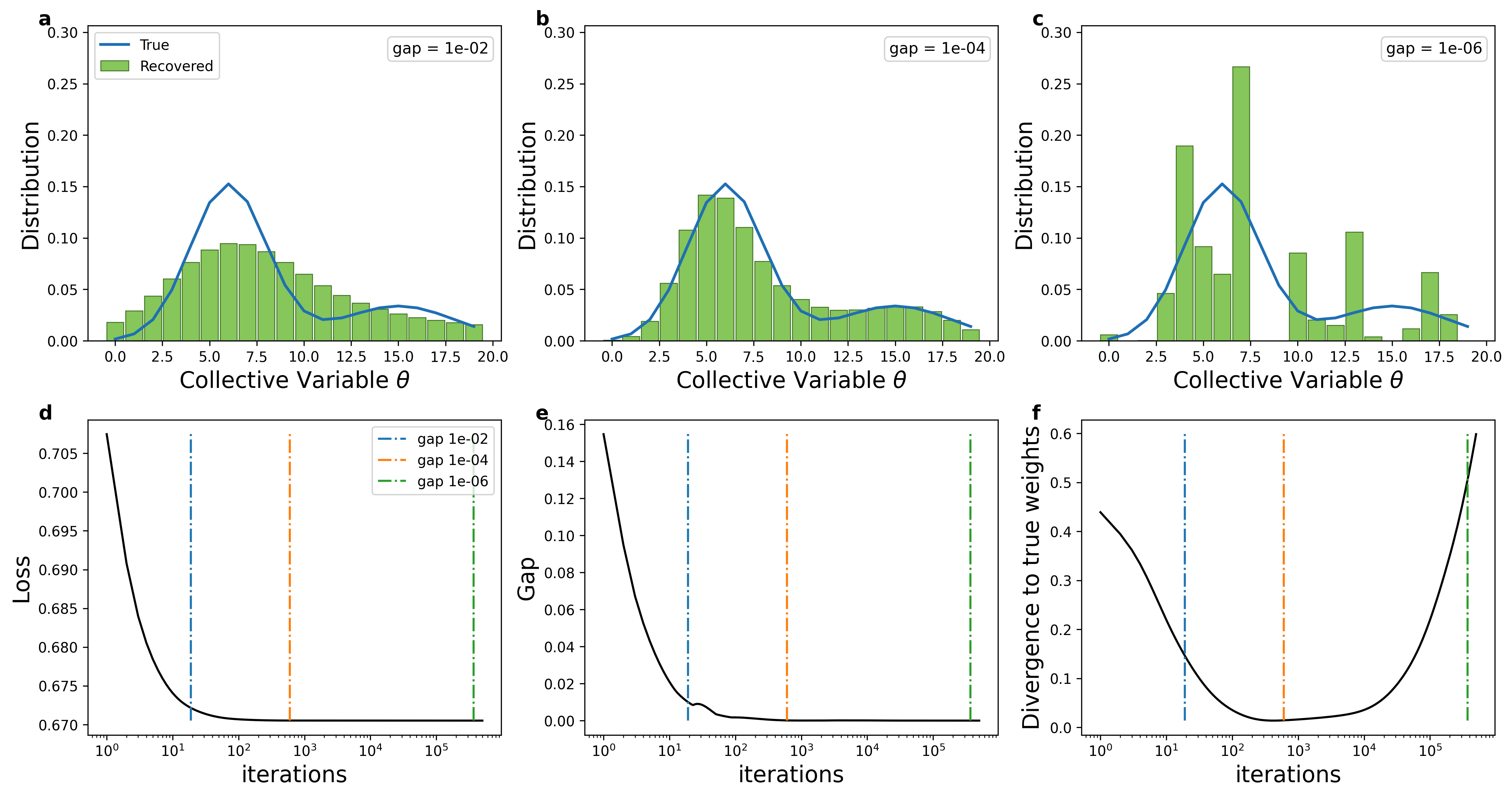}
\caption{\label{fig: EM_with_thresholds}
Comparison of thresholds for the interior fixed-point diagnostic
$\|\nabla F(\alpha^{(t)})-1_M\|_\infty$ on the Hsp90 example
($n=100{,}000$, $M=20$).
{\bf a.--c.} Estimated weights at thresholds $10^{-2},10^{-4},10^{-6}$.
{\bf d.--f.} Loss, diagnostic, and weight divergence
$\operatorname{KL}(\alpha^{(t)}\|\alpha_{\mathrm{true}})$ to the displayed
true weights. The saved stopping iterations are $19$, $603$, and $369458$.}
\end{center}
\end{figure}

In view of the KKT condition, the gap
\begin{equation}
\label{eq:FW-gap}
\mathcal G(\alpha):=
\max_{1\leq j\leq M}[\nabla F(\alpha)]_j-1.
\end{equation}
vanishes exactly at a maximizer, and concavity gives
\(
0\leq F(\hat\alpha_n)-F(\alpha)\leq\mathcal G(\alpha).
\)
Indeed, concavity bounds the difference by
$\max_{\beta\in\Delta_{M-1}}\langle\nabla F(\alpha),\beta-\alpha\rangle$,
which equals~\eqref{eq:FW-gap} by~\eqref{eq:gradient-simplex-identity}.
The infinity norm may still be useful empirically, but need not vanish at a
sparse solution.

For the EM step, write
\(
d_t:=\alpha^{(t+1)}-\alpha^{(t)}
\)
and define
\begin{equation}
\label{eq: stopCrit}
\mathcal V_t :=\left\langle\nabla F(\alpha^{(t)}),d_t\right\rangle.
\end{equation}
Equation~\eqref{eq: Updt} gives the useful identity
\[
\mathcal V_t=\sum_j\alpha_j^{(t)}
\bigl([\nabla F(\alpha^{(t)})]_j-1\bigr)^2.
\]
Thus $\mathcal V_t$ is the variance gap of \citet{chok2025optimization}, a
weighted squared norm that downweights coordinates approaching zero. For $\beta\in\Delta_{M-1}$, define the local seminorm
\(
\|v\|_\beta^2:=v^\top[-\nabla^2F(\beta)]v.
\)

\begin{prop}
\label{prop: SelfConcGrad}
Let $\alpha,\beta\in\Delta_{M-1}$, put $d=\alpha-\beta$, and assume
\[
\tau:=\sqrt n\,\|d\|_\beta<1.
\]
Then
\begin{equation}
\label{eq:self-concordant-gradient-bound}
(1+\tau)^{-1}\|d\|_\beta^2
\leq
\langle\nabla F(\beta)-\nabla F(\alpha),d\rangle
\leq
(1-\tau)^{-1}\|d\|_\beta^2.
\end{equation}
\end{prop}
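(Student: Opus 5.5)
The plan is to use the standard self-concordance argument along the segment joining $\beta$ and $\alpha$, relying on Proposition~\ref{prop: Self-Conc}, which says that $f=-F$ is $2\sqrt n$-self-concordant on $\Omega$. Set $\gamma(t):=\beta+t d$ for $t\in[0,1]$ and $\phi(t):=\|d\|_{\gamma(t)}^2=d^\top[-\nabla^2F(\gamma(t))]d$. Since $\alpha,\beta\in\Delta_{M-1}$, convexity of the simplex gives $\gamma(t)\in\Delta_{M-1}$. Each $s_i(\gamma(t))$ is a convex combination of $s_i(\alpha)$ and $s_i(\beta)$. These are positive wherever $F$ is finite, and we assume this (otherwise both sides of the claim are trivial or infinite). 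Hence the whole segment lies in $\Omega$.

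By the fundamental theorem of calculus,
\[
\langle\nabla F(\beta)-\nabla F(\alpha),d\rangle=\int_0^1 d^\top[-\nabla^2F(\gamma(t))]d\,\diff t=\int_0^1\phi(t)\,\diff t .
\]
The task therefore reduces to sandwiching $\phi(t)$ between multiples of $\phi(0)=\|d\|_\beta^2$. Self-concordance with constant $M_f=2\sqrt n$ gives
\[
|\phi'(t)|=|D^3f(\gamma(t))[d,d,d]|\le 2\sqrt n\,\phi(t)^{3/2}.
\]
Writing $\psi:=\phi^{-1/2}$ wherever $\phi>0$, this becomes $|\psi'|\le\sqrt n$. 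We then get $\psi(t)\ge\psi(0)-\sqrt n t$, and, as long as the right-hand side is positive, $\psi(t)\le\psi(0)+\sqrt n t$. This yields the standard bounds
\[
\frac{\phi(0)}{(1+t\tau)^2}\le\phi(t)\le\frac{\phi(0)}{(1-t\tau)^2},\qquad t\in[0,1].
\]
The upper bound is valid because $\tau=\sqrt n\,\phi(0)^{1/2}<1$ keeps $\psi(0)-\sqrt n t>0$. If $\phi(0)=0$, the differential inequality forces $\phi\equiv0$ by a Grönwall/uniqueness argument, and both sides of the claim vanish.

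Integrating gives $\int_0^1(1+t\tau)^{-2}\,\diff t=(1+\tau)^{-1}$ and $\int_0^1(1-t\tau)^{-2}\,\diff t=(1-\tau)^{-1}$. These are exactly the two constants in~\eqref{eq:self-concordant-gradient-bound}.

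The main obstacle is justifying the ODE comparison at points where $\phi$ might vanish, together with checking that the segment stays inside the domain of self-concordance. For the first issue, I would note from~\eqref{eq: Hess} that $\phi(t)=\frac1n\sum_i (K_{i\bullet}d)^2/s_i(\gamma(t))^2$. This is either identically zero in $t$ (when $Kd=0$) or strictly positive for all $t$. That observation removes the degenerate case cleanly and makes $\psi$ differentiable throughout.
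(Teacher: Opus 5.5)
Your proposal is correct and is essentially the paper's proof. The paper likewise sets $q(s)=\sqrt{\phi''(s)}$ with $\phi(s)=-F(\beta+sd)$, so your $\psi$ is $1/q$. It uses Proposition~\ref{prop: Self-Conc} to get $|q'|\le\sqrt n\,q^2$, deduces $q(0)/(1+s\tau)\le q(s)\le q(0)/(1-s\tau)$, squares and integrates over $[0,1]$, and treats $Kd=0$ separately.

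Two small remarks. The positivity condition you attach to $\psi(t)\le\psi(0)+\sqrt n\,t$ actually belongs to $\psi(t)\ge\psi(0)-\sqrt n\,t$, which your next sentence correctly uses. Your domain caveat is automatic, because the Gaussian kernel makes every $s_i(\alpha)>0$ on the simplex.
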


\begin{cor}
\label{cor:cor}
Let $\tau_t:=\sqrt n\,\|d_t\|_{\alpha^{(t)}}<1$. Then
\begin{align*}
&\langle\nabla F(\alpha^{(t+1)}),d_t\rangle+(1+\tau_t)^{-1}\|d_t\|_{\alpha^{(t)}}^2
\leq\mathcal V_t\\
&\qquad\leq
\langle\nabla F(\alpha^{(t+1)}),d_t\rangle
+(1-\tau_t)^{-1} \|d_t\|_{\alpha^{(t)}}^2.
\end{align*}
\end{cor}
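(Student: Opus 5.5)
The corollary is Proposition~\ref{prop: SelfConcGrad} specialized to $\beta=\alpha^{(t)}$ and $\alpha=\alpha^{(t+1)}$, combined with the definition of $\mathcal V_t$. I will first check that this choice is admissible. Both iterates lie in $\Delta_{M-1}$: the EM update \eqref{eq: Upd} preserves nonnegativity, and summing over $j$ gives total mass one. Moreover, $d=\alpha-\beta=\alpha^{(t+1)}-\alpha^{(t)}=d_t$. The hypothesis $\tau_t<1$ is exactly the hypothesis $\tau<1$ of Proposition~\ref{prop: SelfConcGrad}, with the local seminorm taken at $\beta=\alpha^{(t)}$.

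Applying \eqref{eq:self-concordant-gradient-bound} then gives
\[
(1+\tau_t)^{-1}\|d_t\|_{\alpha^{(t)}}^2
\leq
\langle\nabla F(\alpha^{(t)})-\nabla F(\alpha^{(t+1)}),d_t\rangle
\leq
(1-\tau_t)^{-1}\|d_t\|_{\alpha^{(t)}}^2 .
\]
The middle term splits as $\langle\nabla F(\alpha^{(t)}),d_t\rangle-\langle\nabla F(\alpha^{(t+1)}),d_t\rangle$. By the definition \eqref{eq: stopCrit}, this equals $\mathcal V_t-\langle\nabla F(\alpha^{(t+1)}),d_t\rangle$. Adding $\langle\nabla F(\alpha^{(t+1)}),d_t\rangle$ to all three members gives both claimed inequalities directly.

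One technical point is that $\nabla F$ must be finite at both iterates. This requires $s(\alpha)>0$ componentwise, meaning both points lie in the domain $\Omega$ of Proposition~\ref{prop: Self-Conc}. I will argue this from the initialization $\alpha^{(0)}=u>0$. Entries of $K$ are positive Gaussian densities, so a strictly positive $\alpha^{(t)}$ yields a strictly positive $\alpha^{(t+1)}$ by \eqref{eq: Upd}. By induction every iterate is interior, and $s_i>0$ holds at every iterate. I expect this domain check to be the only nontrivial step; the rest is the algebraic rearrangement above.

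The identity $\mathcal V_t=\sum_j\alpha_j^{(t)}([\nabla F]_j-1)^2$ is not needed for the inequality itself. It only supplies the interpretation of $\mathcal V_t$ as the variance gap.
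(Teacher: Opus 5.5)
Your proposal is correct and matches the paper, which gives no separate proof and treats the corollary as an immediate consequence of Proposition~\ref{prop: SelfConcGrad} with $\beta=\alpha^{(t)}$, $\alpha=\alpha^{(t+1)}$, $d=d_t$, followed by the same rearrangement through the definition of $\mathcal V_t$. Your domain check is valid but needs no induction: every entry of $K$ is a positive Gaussian density, so $s_i(\alpha)\geq\min_m K_{im}>0$ for every $\alpha\in\Delta_{M-1}$, and hence the whole simplex, including the segment between the two iterates, lies in $\Omega$.
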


Thus, $\mathcal G$ is a certified optimization gap, while $\mathcal V_t$ is a
smoother EM diagnostic as we shall further show in Figure~\ref{fig: igg_gaps_comparison}.

\subsection{Practical considerations on stopping criteria}
For early stopping, we compare the variance-gap criterion with the
interior fixed-point diagnostic
$\|\nabla F(\alpha)-1_M\|_\infty<\eta$, using
$\eta\in[10^{-3},10^{-2}]$. Empirically, the square root of the variance gap
behaves like a smoothed version of this diagnostic, so we use
\[ 
\big\langle \nabla_\alpha F(\alpha^{(t)}), \alpha^{(t+1)}- \alpha^{(t)}\big\rangle^{1/2} < \eta
\]
with $\eta\in[10^{-3},10^{-2}]$.

The square root of the variance gap can be interpreted as a weighted $L_2$-norm of $\nabla F(\alpha^{(t)}) - 1$ with respect to the weights $\alpha^{(t)},$ and is also related to the time-derivative of $F$ for a continuous-time flow on the simplex~\citep{chok2025optimization}. Connecting these interpretations to our results, with the goal of stronger numerical guidance for practitioners, is an important focus for future development.

\section{Statistical properties}
\label{sec: stats}

\subsection{Stability under an additional noise
component}

Recall that the ambient image dimension is $N$. As above, $y_i$ is obtained from the signal plus noise model. 
Let $\eta_i$, $i=1,\ldots,n$, be noise vectors in $\mathbb{R}^N$ and set $\widetilde y_i=y_i+\eta_i$.  Let $\widetilde F$ denote the objective
computed from $\widetilde y_i$. Let $\widehat\alpha$ maximize $F$, and let
$\widetilde\alpha$ be any maximizer of $\widetilde F$. Set
\[
\mu_N:=
\inf_{\alpha\in\Delta_{M-1}}
\lambda_{\min}^{\mathcal T}\!\bigl(-\nabla^2F(\alpha)\bigr).
\]

\begin{thm}
\label{thm: NoiseMisspecification}
Assume that the collection $(\eta_i)_{i=1}^n$ is independent of the
uncorrupted data, that $\mathbb E\eta_i=0$, and that
$\mathbb E(\eta_i\eta_i^\top)=\Sigma_i$. Write
$\varepsilon_N:=\max_i\|\Sigma_i\|_{\mathrm{op}}^{1/2}$.
For fixed $n,M$, suppose~\eqref{eq:tangent-injectivity} holds almost surely.
Then, as $N\to\infty$,
\begin{equation}
\label{eq:misspecified-noise-argmax}
\|\widetilde\alpha-\widehat\alpha\|
=O_p\!\left(\frac{R_x\varepsilon_N}{\mu_N\sigma^2}\right).
\end{equation}
\end{thm}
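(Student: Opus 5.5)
The plan is a standard argmax-perturbation argument: strong concavity of $F$ on the tangent space, combined with a bound on how much the additional noise perturbs the gradient.

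\textbf{Step 1 (first-order optimality).} Since $\widehat\alpha$ maximizes the concave $F$ over the simplex, concavity gives $\langle\nabla F(\widehat\alpha),\widetilde\alpha-\widehat\alpha\rangle\le0$. Likewise $\widetilde\alpha$ maximizes $\widetilde F$, so $\langle\nabla\widetilde F(\widetilde\alpha),\widehat\alpha-\widetilde\alpha\rangle\le0$. Put $d:=\widetilde\alpha-\widehat\alpha\in\mathcal T$.

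\textbf{Step 2 (curvature lower bound).} By the mean value theorem along the segment joining the two points, which stays in the simplex, together with the definition of $\mu_N$,
\[
\langle\nabla F(\widehat\alpha)-\nabla F(\widetilde\alpha),d\rangle\ge\mu_N\|d\|^2 .
\]
Combining this with Step~1 yields
\[
\mu_N\|d\|^2\le\langle\nabla\widetilde F(\widetilde\alpha)-\nabla F(\widetilde\alpha),d\rangle .
\]
Both $\nabla F$ and $\nabla\widetilde F$ contain the component $1_M$, which is orthogonal to $d$. It therefore suffices to bound $\sup_{\alpha\in\Delta_{M-1}}\|P_{\mathcal T}(\nabla\widetilde F-\nabla F)(\alpha)\|$, which gives $\|d\|\le\mu_N^{-1}\sup_\alpha\|P_{\mathcal T}(\nabla\widetilde F-\nabla F)\|$. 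The tangent-injectivity assumption \eqref{eq:tangent-injectivity} is what allows $\mu_N>0$ to be used. I would handle $\mu_N$ as a random quantity and keep it symbolic.

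\textbf{Step 3 (gradient perturbation).} Write the $j$-th gradient coordinate as $[\nabla F(\alpha)]_j=\frac1n\sum_i r_j(y_i;\alpha)/\alpha_j$. This form is inconvenient near the boundary, so I would instead test against $d$ directly:
\[
\langle\nabla F(\alpha),d\rangle=\frac1n\sum_i\frac{\sum_jK_{ij}d_j}{\sum_mK_{im}\alpha_m}.
\]
Each ratio $K_{ij}/K_{im}=\exp\{(\langle y_i,x_j-x_m\rangle-\tfrac12\|x_j\|^2+\tfrac12\|x_m\|^2)/\sigma^2\}$. Replacing $y_i$ by $y_i+\eta_i$ multiplies it by $\exp\{\langle\eta_i,x_j-x_m\rangle/\sigma^2\}$. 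The key quantities are therefore the scalars $g_{i,jm}:=\langle\eta_i,x_j-x_m\rangle/\sigma^2$. By independence and the covariance assumption they have mean zero and variance at most $\varepsilon_N^2R_x^2/\sigma^4$, so by Chebyshev they are $O_p(R_x\varepsilon_N/\sigma^2)$. There are finitely many of them because $n$ and $M$ are fixed. Writing the summand for observation $i$ as a smooth function $\phi_i$ of the vector of log-ratios, a first-order expansion gives
\[
\langle\nabla\widetilde F(\alpha)-\nabla F(\alpha),d\rangle=\frac1n\sum_i\nabla\phi_i\cdot g_i+\text{higher order}.
\]

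\textbf{Main obstacle.} The derivative of $\phi_i$ must be bounded uniformly over $\alpha$ in the simplex by $C\|d\|$, even though $1/s_i(\alpha)$ can be large relative to the kernel entries. The route I would try first is to normalize by dividing numerator and denominator by $\max_mK_{im}$, so that all weighted ratios lie in $[0,1]$. Then $\sum_jK_{ij}d_j/s_i$ becomes a difference of two responsibility-type averages, whose derivatives with respect to the log-ratios are bounded by $\|d\|_1\le\sqrt M\|d\|$. If this fails near the boundary, I would fall back on Lipschitz continuity of the responsibilities in the log-kernel, which holds with constant $1$ in the sup norm, since a softmax is $1$-Lipschitz. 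Once this is in place the remainder is $o_p$ of the leading term as $\varepsilon_N\to0$; if $\varepsilon_N$ does not vanish, one instead uses the mean-value form of the expansion. Combining Steps~2 and~3 and dividing by $\mu_N\|d\|$ gives \eqref{eq:misspecified-noise-argmax}.
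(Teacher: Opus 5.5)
\textbf{There is a genuine gap in Step~3.} Your Steps 1--2 match the paper: first-order optimality at both maximizers plus tangent strong concavity give $\mu_N\|d\|^2\le\langle\nabla\widetilde F(\widetilde\alpha)-\nabla F(\widetilde\alpha),d\rangle$. The problem is the next move. You replace the right-hand side by a bound that is uniform over $\alpha\in\Delta_{M-1}$, and no such bound with an $N$-independent constant exists. Your own expansion gives $\partial\phi_i/\partial g_k=\frac{K_{ik}}{s_i(\alpha)}(d_k-\alpha_k\phi_i)$. At $\alpha=e_m$ and $k\neq m$ this equals $(K_{ik}/K_{im})\,d_k$, where $\log(K_{ik}/K_{im})=(\|y_i-x_m\|^2-\|y_i-x_k\|^2)/(2\sigma^2)$ need not stay bounded as $N\to\infty$.

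None of your proposed fixes removes this factor:
\begin{itemize}
\item Dividing by $\max_mK_{im}$ bounds the numerator, but the denominator $\sum_m\alpha_mK_{im}/\max_mK_{im}$ can be arbitrarily small.
\item $\sum_jK_{ij}d_j/s_i(\alpha)=\{s_i(\widetilde\alpha)-s_i(\widehat\alpha)\}/s_i(\alpha)$ is a difference of unbounded ratios, not of responsibilities.
\item The $1$-Lipschitz softmax controls $r_j$, but the gradient is $n^{-1}\sum_ir_j(y_i;\alpha)/\alpha_j$, so the boundary blow-up returns.
\end{itemize}
The uniform route therefore carries an extra random factor such as $\max_{i,k,m}K_{ik}/K_{im}$, which does not appear in \eqref{eq:misspecified-noise-argmax}.

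\textbf{Missing idea: evaluate only at $\widetilde\alpha$ and use its KKT conditions.} The comparison can be made exactly, with no Taylor expansion. The $i$th summand of $[\nabla\widetilde F(\alpha)]_j$ equals that of $[\nabla F(\alpha)]_j$ times $e^{\langle x_j,\eta_i\rangle/\sigma^2}$, divided by a responsibility-weighted average of the $e^{\langle x_m,\eta_i\rangle/\sigma^2}$. Hence the two gradients agree coordinatewise up to a factor in $[e^{-\delta_N},e^{\delta_N}]$, with $\delta_N:=\sigma^{-2}\max_{i,m,\ell}|\langle x_m-x_\ell,\eta_i\rangle|$. KKT for $\widetilde F$ gives $[\nabla\widetilde F(\widetilde\alpha)]_j\le1$ for every $j$. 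Therefore $\|\nabla F(\widetilde\alpha)-\nabla\widetilde F(\widetilde\alpha)\|_2\le\sqrt M(e^{\delta_N}-1)$, and so $\|d\|\le\sqrt M(e^{\delta_N}-1)/\mu_N$ with no data-dependent constant.

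\textbf{A second, smaller gap: large $\delta_N$.} When $R_x\varepsilon_N/\sigma^2$ is unbounded, your mean-value expansion still leaves an $e^{\delta_N}$ factor, so the bound is not $O_p$ of the target scale. The paper handles the case $\delta_N>1$ in two steps:
\begin{itemize}
\item It uses the simplex diameter, $\|d\|\le\sqrt2$.
\item It uses the a priori bound $\mu_N\le M$. This comes from the trace of the tangent Hessian at $u$, namely $M^2n^{-1}\sum_i\|r(y_i;u)-u\|^2\le M(M-1)$.
\end{itemize}
Together these give $\|d\|\le 2M\delta_N/\mu_N$ in all cases, and your Chebyshev step then finishes the proof.
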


Here, $X_N=O_p(A_N)$ for a positive random scale $A_N$ means
$X_N/A_N=O_p(1)$.
The random factor $\mu_N^{-1}$ records the conditioning of the uncorrupted
problem. In particular, their distance converges to zero whenever
$R_x\varepsilon_N/(\mu_N\sigma^2)\to0$ in probability. 

Therefore, for heteroscedastic noise that is not ``too spiky" in the sense that $\varepsilon_N$ is not too large, the estimates will be robust to noise.

\medskip
\subsection{Asymptotic distribution of the finite-grid}
We use the standard constrained $M$-estimation limit
\citep{geyer1994constrained}. We now return to the fixed ambient dimension
$d$ of Section~\ref{sec: Problem}, keep the finite grid fixed as
$n\to\infty$, and parametrize the simplex by
\[
\Theta:=\{\theta\in\mathbb R^{M-1}:\theta_j\geq0,\
\mathbf1^\top\theta\leq1\},
\qquad
\alpha(\theta):=(\theta_1,\ldots,\theta_{M-1},
1-\mathbf1^\top\theta).
\]
This parametrization is useful for the central limit theorem below as any element on the simplex has $M-1$ degrees of freedom.
Let $f_\theta:=p_\sigma*\mathrm P_M^{\alpha(\theta)}$ and
\[
s_\theta(y):=\nabla_\theta\log f_\theta(y)
=\frac{\bigl(p_\sigma(y\mid x_j)-p_\sigma(y\mid x_M)\bigr)_{j=1}^{M-1}}
{f_\theta(y)}.
\]
Let $\alpha^*$ be a maximizer of
$\mathbb E[\log\{(p_\sigma*\mathrm P_M^\alpha)(Y)\}]$ over
$\Delta_{M-1}$.
Let $\theta^*$ and $\hat\theta_n$ be the first $M-1$ coordinates of
$\alpha^*$ and $\hat\alpha_n$, respectively. Put
\[
I_{\theta^*}:=
\mathbb E[s_{\theta^*}(Y)s_{\theta^*}(Y)^\top],
\qquad
b:=\mathbb E[s_{\theta^*}(Y)].
\]
As the MLE is constrained, the constraint will be relevant for the limiting distribution. We thus introduce the critical cone 
\[
\mathcal C:=\left\{u\in\mathbb R^{M-1}:
\begin{array}{l}
u_j\geq0\text{ whenever }\theta_j^*=0,\\
\mathbf1^\top u\leq0\text{ whenever }\mathbf1^\top\theta^*=1,\\
b^\top u=0
\end{array}
\right\}.
\]

\begin{thm}
\label{thm: CLT}
Suppose that $P_0$ has density $p_\sigma*\rho$, where $\rho$ is a
boundedly supported probability measure, and that the candidate points are
pairwise distinct. Then $I_{\theta^*}$ is positive definite and
\begin{equation}
\label{eq:constrained-mle-limit}
\sqrt n(\hat\theta_n-\theta^*)
\Rightarrow
\argmin_{u\in\mathcal C}
\left\{\frac12u^\top I_{\theta^*}u-Z^\top u\right\},
\qquad
Z\sim\mathcal N_{M-1}(0,I_{\theta^*}-bb^\top),
\end{equation}
where $\Rightarrow$ denotes convergence in distribution.
If $\theta^*$ is in the interior of $\Theta$, the limit is
$\mathcal N_{M-1}(0,I_{\theta^*}^{-1})$.
\end{thm}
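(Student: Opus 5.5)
We follow the constrained $M$-estimation route of \citet{geyer1994constrained}. The work splits into four parts: identifiability and positive definiteness of $I_{\theta^*}$; uniqueness and consistency of $\theta^*$ and $\hat\theta_n$; a quadratic expansion of the empirical criterion near $\theta^*$, with the linear term handled separately; and an argmin-continuity argument over the tangent cone of $\Theta$ at $\theta^*$.

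\textbf{Step 1: $I_{\theta^*}$ is positive definite.} Suppose $u^\top I_{\theta^*}u=0$. Since $f_{\theta^*}>0$ and $P_0$ has an everywhere positive density, this forces
\[
\sum_{j<M}u_j\bigl(p_\sigma(\cdot\mid x_j)-p_\sigma(\cdot\mid x_M)\bigr)=0
\]
almost everywhere, hence everywhere by continuity. Gaussian densities with pairwise distinct centres are linearly independent; one way to see this is to take Fourier transforms, where the characters $e^{\iota\langle\xi,x_j\rangle}$ with distinct $x_j$ are independent. Therefore $u=0$.

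\textbf{Step 2: uniqueness and consistency.} The same independence makes $\alpha\mapsto\mathbb E\log f_{\alpha}(Y)$ strictly concave on $\Delta_{M-1}$. Thus $\theta^*$ is unique.

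Bounded support of $\rho$ gives the envelope bound
\[
|\log f_\theta(y)|\le C(1+\|y\|^2)
\]
uniformly in $\theta$, since each $p_\sigma(y\mid x_m)$ lies between two Gaussians in $y$. This envelope, together with compactness of $\Theta$ and continuity in $\theta$, yields a uniform law of large numbers. Consistency $\hat\theta_n\to\theta^*$ then follows from the standard argmax argument.

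\textbf{Step 3: quadratic expansion.} The components of $s_\theta$ are bounded by $C e^{c\|y\|}$ uniformly in $\theta$, because $p_\sigma(y\mid x_j)/p_\sigma(y\mid x_M)\le e^{c\|y\|+c'}$. All moments of these bounds are finite under a Gaussian convolution. Hence $\theta\mapsto\log f_\theta(y)$ is $C^2$ on a neighbourhood of $\Theta$, with square-integrable derivatives, and its Hessian is $-s_\theta s_\theta^\top$. This gives
\[
n\bigl(F_n(\theta^*+u/\sqrt n)-F_n(\theta^*)\bigr)=\sqrt n\,u^\top\mathbb P_n s_{\theta^*}-\tfrac12u^\top I_{\theta^*}u+o_p(1),
\]
uniformly on compacts. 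Here $\mathbb P_n$ denotes the empirical measure, and the remainder is controlled by a stochastic equicontinuity/ULLN argument for the Hessian.

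\textbf{Step 4: the linear term.} Write $\sqrt n\,\mathbb P_n s_{\theta^*}=\sqrt n\,b+Z_n$ with $Z_n\Rightarrow Z\sim\mathcal N(0,I_{\theta^*}-bb^\top)$ by the CLT. The term $b$ may be nonzero at a boundary point, and this is where the real difficulty lies. By the KKT conditions of Step 2, $b^\top v\le0$ for every feasible direction $v$, with equality exactly on $\mathcal C$.

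Directions with $b^\top u<0$ incur a penalty $\sqrt n\,b^\top u\to-\infty$. So the rescaled maximizer is asymptotically confined to the face of the tangent cone $T_\Theta(\theta^*)$ on which $b^\top u=0$, which is precisely $\mathcal C$. To make this rigorous:
\begin{enumerate}
\item First prove $\sqrt n$-consistency. A rate theorem applies because the expected criterion satisfies
\[
\mathbb E\log f_\theta-\mathbb E\log f_{\theta^*}\le b^\top(\theta-\theta^*)-c\|\theta-\theta^*\|^2\le -c\|\theta-\theta^*\|^2,
\]
using local strong concavity from Step 1.
\item Then show that $\sqrt n\,b^\top\sqrt n(\hat\theta_n-\theta^*)=O_p(1)$. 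This forces the limit into $\mathcal C$.
\item Conclude by the argmax continuous mapping theorem applied to the concave processes restricted to the polyhedral cone. Convexity allows pointwise convergence to be upgraded to convergence of the argmins, following Geyer, giving \eqref{eq:constrained-mle-limit}.
\end{enumerate}

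\textbf{Interior case.} If $\theta^*$ is interior, then $b=0$ and $\mathcal C=\mathbb R^{M-1}$. The limit is $I_{\theta^*}^{-1}Z$ with $Z\sim\mathcal N(0,I_{\theta^*})$, i.e.\ $\mathcal N(0,I_{\theta^*}^{-1})$.
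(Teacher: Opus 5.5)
Your proposal is correct and follows essentially the same route as the paper. Both arguments get the uniform LLN and consistency from the $C(1+\|y\|^2)$ envelope, and positive definiteness of $I_{\theta^*}$ from linear independence of Gaussian translates. Both then use a local quadratic expansion carrying the drift $\sqrt n\,b^\top u$, prove $\sqrt n$-tightness, and close with an argmax argument on $\mathcal C$ that uses $b^\top u\le 0$ along feasible directions.

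The differences are only technical. You reach the expansion through a uniform exponential score envelope and a Taylor expansion with Hessian $-s_\theta s_\theta^\top$. Note that $\log f_\theta$ is not defined on a full neighbourhood of $\Theta$, but segments inside the convex set $\Theta$ suffice. You also get tightness from a rate theorem, whereas the paper uses the exact identity $m_{\theta^*+h}-m_{\theta^*}=\log\{1+h^\top s_{\theta^*}\}$ and concavity on spheres.
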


\begin{rmk}[Correct specification and boundary points]
If $P_0$ has density $f_{\theta^*}$, then $b=0$ and $\mathcal C$ is the
tangent cone. After multiplication by $I_{\theta^*}^{1/2}$, the limit in
\eqref{eq:constrained-mle-limit} is the Euclidean projection of a standard
Gaussian vector onto $I_{\theta^*}^{1/2}\mathcal C$. At a boundary point
this projection is generally non-Gaussian.
\end{rmk}

\begin{rmk}[Misspecification at the boundary]
At a boundary maximizer under misspecification, $b$ need not vanish. For a
feasible displacement $u/\sqrt n$, the full log-likelihood has leading
deterministic term $\sqrt n\,b^\top u$, while its random fluctuation is
$O_p(1)$. If $b^\top u<0$, the deterministic loss rules out that direction
asymptotically. Thus $\mathcal C$ retains only feasible directions satisfying
$b^\top u=0$.
\end{rmk}

\section{Simulation results}
\label{sec: sim}

In this section, we present numerical experiments illustrating the issues that were theoretically investigated above. In Section \ref{sec: RunExp}, we present the biomolecular system and cryo-EM setup that we used as an example. 
In Section~\ref{sec: ConfSpace}, we exhibit how the weights can be biased when the structures do not cover the space well enough, thereby illustrating the practical consequence of Theorem~\ref{thm: LowBd}.
In Section~\ref{sec: Transient}, the question is similar, but instead of having part of the conformational space not covered, the density of the structures is too low in a certain region. In Section~\ref{sec: NumExpSpikes}, we then numerically investigate what determines where the spike appears, echoing Proposition~\ref{prop: Replica}.

\subsection{A running example: Hsp90}
\label{sec: RunExp}

The molecule of interest for our experiments will be Hsp90, a classical example used by the community to simulate biomolecular changes in cryo-EM (\cite{seitz2019simulation}).
In a nutshell, the heat shock protein 90 is a molecule assisting other proteins to fold properly. Images are generated with the \texttt{cryojax} library (\cite{o2026cryojax}). For each candidate
structure we read centered, hydrogen-free atomic coordinates from a PDB file and
build the electron-scattering potential as a sum of per-atom Gaussians from the
tabulated Peng scattering factors, retaining atomic $B$-factors. A
Gaussian-mixture integrator (with error-function pixel integration) then yields
the clean projection on a grid of $N = 128\times128$ pixels at $2.0$~\AA{}/pixel
and $300$~kV, with no PSF or translation and a single fixed pose (Euler angles
$(45^\circ,90^\circ,90^\circ)$) shared across all structures. Two images
therefore differ only through their conformation and noise, which is the
identity-forward-operator setting analyzed above. Let $\mathcal{M}$ be a cosine-tapered circular mask of radius $42$ pixels (one
third of the image width, one-pixel roll-off). Each clean projection $I_m$ is
standardized within the mask to unit signal variance,
\[
  \tilde I_m = \frac{I_m - \mu_{\mathcal{M}}(I_m)}{s_{\mathcal{M}}(I_m)},
\]
and the observed image for a sampled conformation $m$ is
\(
  y = \sqrt{\mathrm{SNR}}\,\tilde I_m + \varepsilon,
  \) \( \varepsilon \sim \mathcal{N}(0,I_N).
\)
The noise is thus i.i.d.\ standard Gaussian ($\sigma^2 = 1$ per pixel) and the
signal scales as $\sqrt{\mathrm{SNR}}$, so $\mathrm{SNR}$ is the in-mask
signal-to-noise power ratio and small $\mathrm{SNR}$ is the per-pixel high-noise regime of
interest. Example images are shown in Figure \ref{fig: EM}b.

The ground truth distribution is generated by picking conformations at random among the frames of the trajectory and assigning each to the nearest candidate structure; in some cases, we chose a uniform distribution while in other cases a bimodal one, see the orange line on the figures below.

\subsection{Reference structures not covering the space}
\label{sec: ConfSpace}

Recall that Figure~\ref{fig: EM} showed how early stopping in the EM acted as a regularizer while the exact solution to the reweighting optimization problem is spiky. We now turn to addressing the question of the importance of the candidates points covering the space well enough, which is the setting of Theorem~\ref{thm: LowBd}.
We first consider the set of reference structures with the motion shown in Figure~\ref{fig: EM}a to generate the data with varying signal-to noise ratio and number of sample points. To reflect the possibility that the set of reference structures might not be complete enough,  we remove the structures with opening angle between 15 and 20 degrees in the set of candidates.

On Figure~\ref{fig: Landscape}, one clearly sees that the candidate structure closer to the structures that have no representative in the dataset gets too much weight, which matches intuition. Also, even though less pronounced, a larger sample size yields a distribution that is closer to the uniform, while decreasing the signal-to-noise ratio also deteriorates the recovery (Figure~\ref{fig: Landscape}b).

\begin{figure}[ht!]
		\begin{center}
\includegraphics[width=0.99\textwidth]{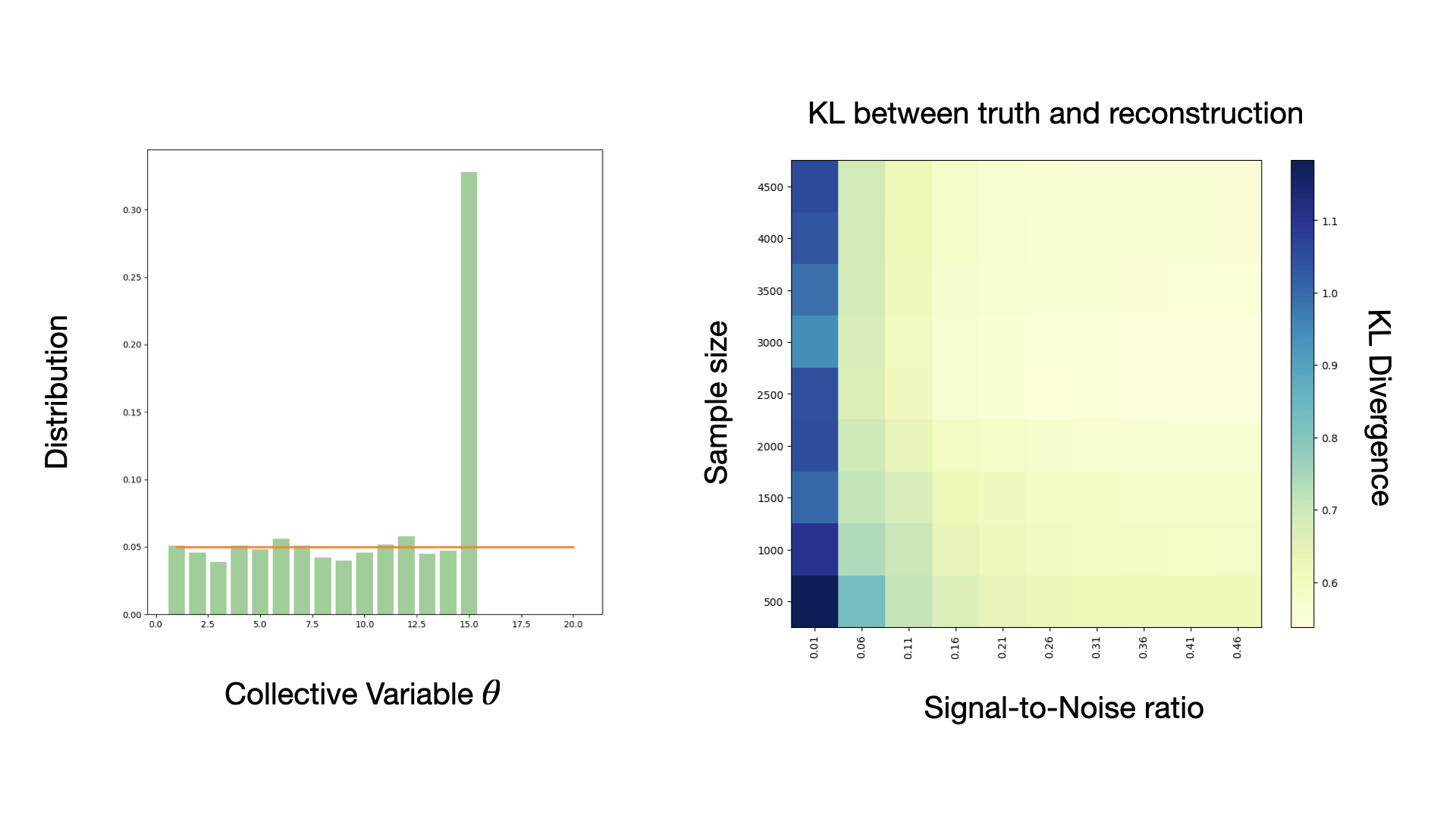}
	\caption{Output of the optimization procedure when the reference structures do not cover a part of the conformational space, while the ground truth is uniform over a larger set of structures.
    {\bf Left:} Comparison of the exact solutions of the problem for a sample size $n=1000$ and SNR=2. {\bf Right:}  $\operatorname{KL}(\widehat\alpha\|\alpha_0)$, averaged over 10 replications of the same random setting. For larger sample sizes and signal-to-noise ratio, the KL diminishes. It does not converge to zero as the set of candidate structures is not covering part of the conformational space. For very small SNR, the spikiness translates into higher KL-divergence.}
	\label{fig: Landscape}	
    \end{center}
\end{figure}

\subsection{Recovery of low-population states}
\label{sec: Transient}

In practice, even if the reference structures cover the entire conformational space, there will be regions in which the density of the structures is larger and regions of lower density (i.e., low-populated states). This is what we explore in the following example. As the spacing between the bins in the middle suggest, there are less structures in the middle of the conformational trajectory than at the boundaries. The actual motion is an opening, similar to that of Figure~\ref{fig: EM}a.
The optimal weights given a dataset are optimized using an exact convex solver and the results are reported in Figure~\ref{fig: lowDens}. In this example, as the reference structures do not cover the space well enough compared with the ground truth, they receive a much larger weight.

\begin{figure}[ht!]
\begin{center}
\includegraphics[width=0.95\textwidth]{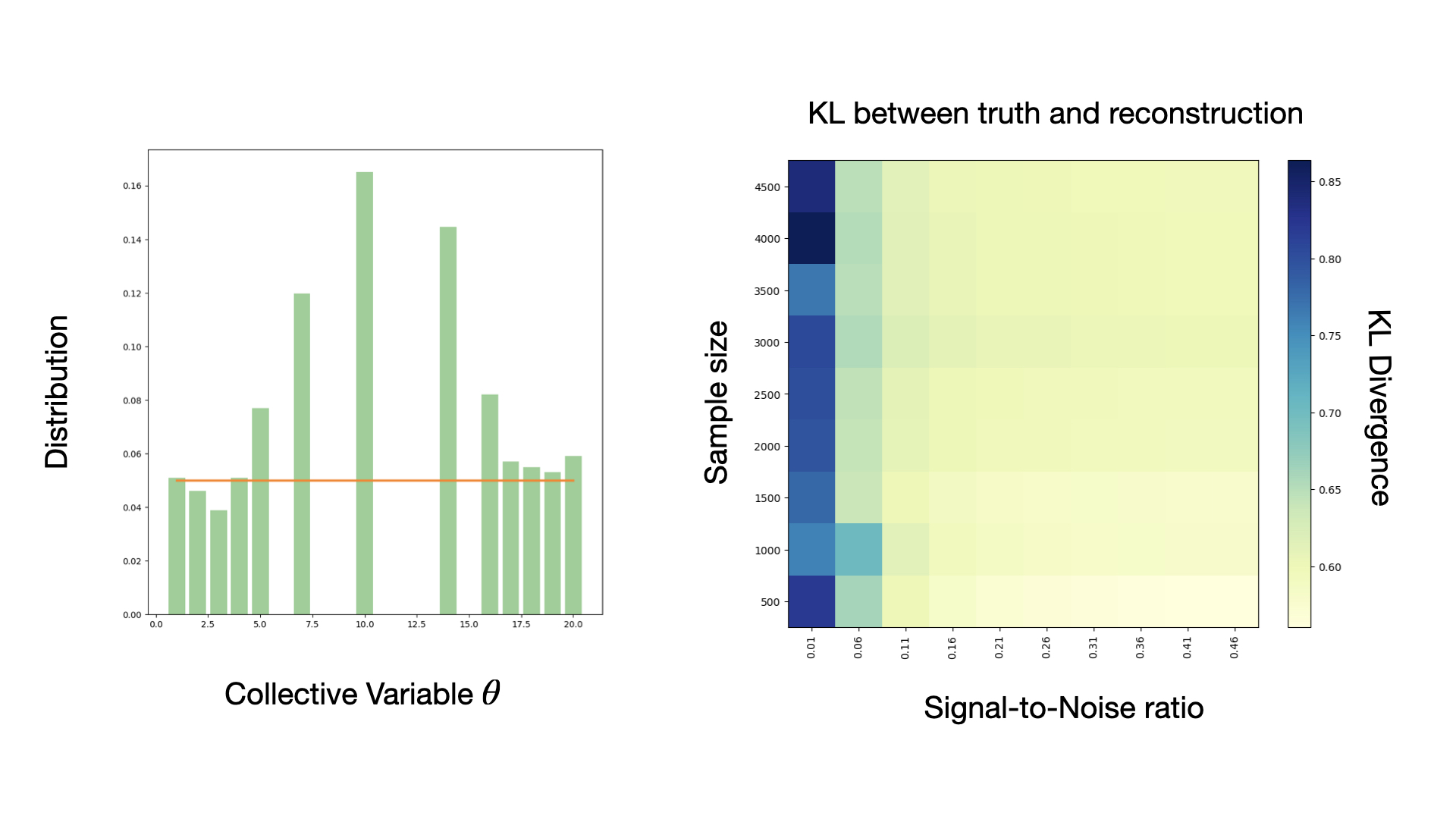}
\caption{\label{fig: lowDens} Output of the optimization procedure when the reference structures coarsely cover the central part of the conformational space, while the ground truth is uniform over a larger set of structures.
    {\bf Left:} Comparison of the exact solutions of the problem for a sample size $n=1000$ and SNR=2. {\bf Right:}  $\operatorname{KL}(\widehat\alpha\|\alpha_0)$, averaged over 10 replications of the same random setting. For larger sample sizes and signal-to-noise ratio, the KL diminishes. It does not converge to zero as the set of candidate structures is not covering part of the conformational space. For very small SNR, the spikiness translates into higher KL-divergence.}
   \end{center}
\end{figure}

\subsection{Where do the spikes appear?}
\label{sec: NumExpSpikes}
Section~\ref{sec: Replica} showed how empirical frequencies and candidate
geometry enter the objective. Figures~\ref{fig: Spikes} and
\ref{fig: Spikes2} illustrate that frequencies alone do not predict the
reconstructed weights. Indeed, the fact that a structure has been sampled (red) more than the ground truth does not imply that the reconstructed weight (green) is above the ground truth as well.
The observed spike locations reflect the joint effect of candidate
geometry, empirical frequencies and noise. 

\begin{figure}[ht!]
	\begin{center}
		\begin{tabular}{@{}c@{}c}\includegraphics[width=0.4\textwidth, height=45mm]{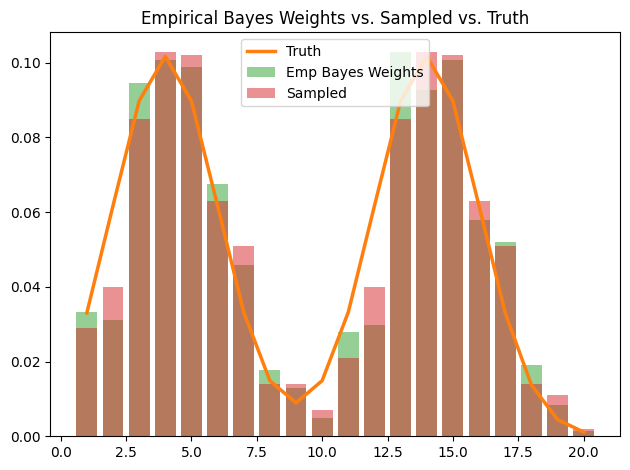}&
		\includegraphics[width=0.4\textwidth, height=45mm]{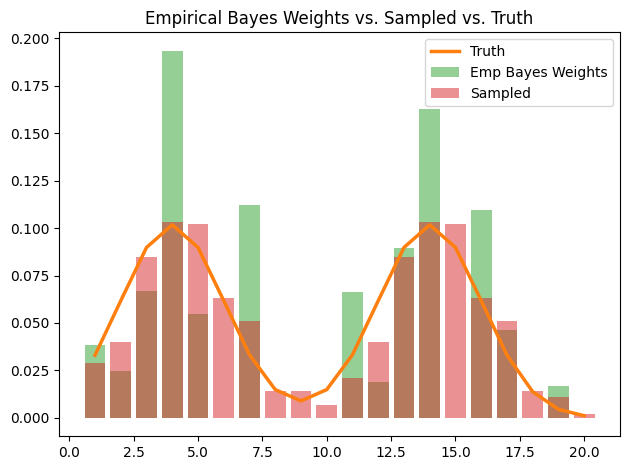}		\\	
		\end{tabular}
	\caption{Comparison between the reconstructed weights, the actual frequencies in the data and the ground truth. From left to right, the sample size is $n=1000$ and the SNR is 0.1 and 0.01, respectively.  }
	\label{fig: Spikes}	
    \end{center}
\end{figure}

\begin{figure}[ht!]
	\begin{center}
		\begin{tabular}{@{}c@{}c}\includegraphics[width=0.4\textwidth, height=45mm]{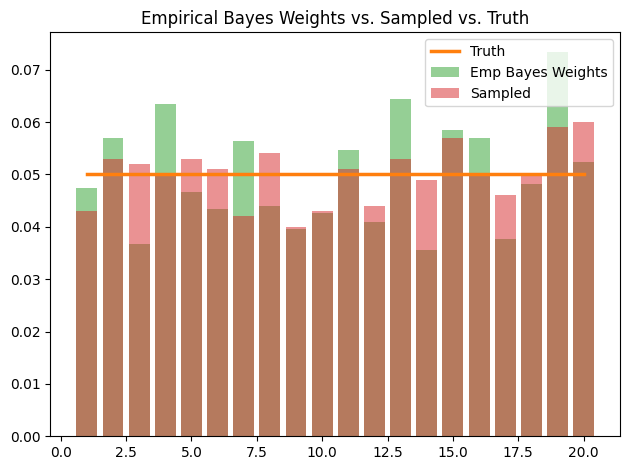}&
		\includegraphics[width=0.4\textwidth, height=45mm]{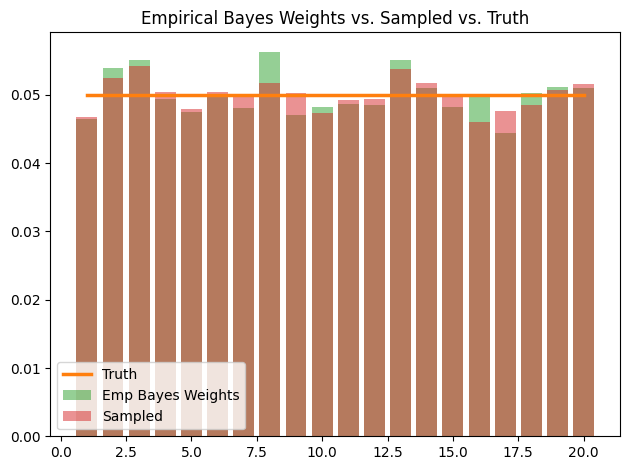}		\\	
		\end{tabular}
	\caption{Comparison between the reconstructed weights, the actual frequencies in the data and the ground truth. From left to right, the sample sizes are $n=1000$ and $n=10 000$, and the SNR is 0.1.}
	\label{fig: Spikes2}	
    \end{center}
\end{figure}

\section{Conclusions}
\label{sec: conclu}

In this work, we analyzed cryo-EM ensemble reweighting over a fixed set of candidate structures as a computational surrogate for a nonparametric maximum-likelihood estimation problem. We developed theoretical results and complemented them with numerical experiments illustrating the behavior of the recovered weights.

Our first conclusion is that the discretization induced by fixing a set of candidate structures is more than a mere approximation and is not a harmless computational convenience: the interplay between the noise level and the coverage of the candidate set of structures fundamentally controls identifiability, conditioning, and interpretability, as also discussed in \cite{mattingly2026measurement}. Candidate structures that are closer than the noise scale induce ill-conditioning, thereby making the weights sensitive to small perturbations. Conversely, when the candidates do not cover the space well enough, there is an unavoidable bias. These insights explain why spiky solutions are neither an anomaly nor necessarily conveying important information about the data; rather, they are an expected outcome of maximum-likelihood fitting on a discrete support.

We also showed that the choice of algorithm (and its parameters) has an important impact on the recovered solutions. In particular, EM slowdowns arise precisely in the near-nonidentifiable regime where structures are similar. Its tendency toward increasingly spiky fits motivates viewing the iteration count as an implicit regularization parameter. Importantly, this means that early stopping might be beneficial. The intricate nonlinear nature of the recursion prevents providing a clear theoretical rule, but we proposed principled stopping diagnostic tools. Future work may benefit from drawing analogs to regularization and early stopping in Richardson-Lucy iteration for image deconvolution (see for example~\cite{bertero2009image}), a close historical analog to our problem.

Overall, our work suggests shifting emphasis from solving the reweighting problem to a more nuanced approach in which one iteratively crafts the candidate set and carefully investigates the early-stopping potential for each dataset analyzed,  extending simulation frameworks such as in \cite{silva2026cryo}. Extending to the full cryo-EM model, i.e., including the pose and projection parts of the forward model, as well as reconstruction, is a step we aim to take in the future.

\section*{Acknowledgment}
Research reported in this publication was supported by the National Institute of General Medical Sciences of the National Institutes of Health under Award Number R35GM157226. The content is solely the responsibility of the authors and does not necessarily represent the official views of the National Institutes of Health. 
The work was also supported by the Alfred P. Sloan Foundation (FG-2023-20853), and the Simons Foundation (1288155). The Flatiron Institute is a division of the Simons Foundation.

\newpage
 \printbibliography
\appendix
\markboth{}{}

\newpage
\section{Proofs}
\label{sec: Proofs}

\begin{proof}[Proof of Proposition~\ref{prop: GradHessKKT}]
Direct differentiation gives the two formulas, and
\[
-v^\top\nabla^2F(\alpha)v
=\frac1n\sum_{i=1}^n
\frac{(K_{i,\bullet}^\top v)^2}{s_i(\alpha)^2}\geq0.
\]
This proves the curvature claims. If two maximizers had different fitted
vectors, strict concavity of $z\mapsto n^{-1}\sum_i\log z_i$ would give a
contradiction at their midpoint.
\end{proof}

\begin{proof}[Proof of Proposition~\ref{prop: IllCond}]
Write $K^{(h)}$ for the likelihood matrix defining $F_h$ and let 
$s^{(h)}(\alpha):=K^{(h)}\alpha$.
For $v=(e_k-e_\ell)/\sqrt2$, the Rayleigh quotient and~\eqref{eq: Hess}
give
\begin{equation}
\label{eq: UBLambda}
\lambda_{\min}^{\mathcal T}\!\bigl(-\nabla^2F_h(\alpha)\bigr)
\leq\frac1{2n}\sum_{i=1}^n
\left(\frac{K^{(h)}_{ik}-K^{(h)}_{i\ell}}
{s_i^{(h)}(\alpha)}\right)^2.
\end{equation}
On a fixed neighborhood of zero, smoothness and positivity of the Gaussian
kernel give constants $C,c>0$ such that, uniformly in $h$, $i$, and $\alpha$,
\[
|K^{(h)}_{ik}-K^{(h)}_{i\ell}|\leq C\|h\|,
\qquad s_i^{(h)}(\alpha)\geq c.
\]
The bound \eqref{eq: UBLambda} is therefore $O(\|h\|^2)$ uniformly in $\alpha$.
Under the stated high-noise scaling, divide the numerator and denominator by
$K^{(h)}_{i\ell}$. The likelihood ratios of the fixed candidates are
$1+O_p(\sigma^{-1})$, while
$K^{(h)}_{ik}/K^{(h)}_{i\ell}-1=O_p(\|h\|/\sigma)$ locally uniformly in $h$.
This gives the final claim.
\end{proof}

\begin{proof}[Proof of Proposition~\ref{prop: Self-Conc}]
For $u\in\mathbb R^M$, put
\(
z_i:= K_{i,\bullet}^\top u / K_{i,\bullet}^\top\alpha.
\)
Direct differentiation gives
\[
D^2f(\alpha)[u,u]=\frac1n\sum_{i=1}^n z_i^2,
\qquad
D^3f(\alpha)[u,u,u]=-
\frac2n\sum_{i=1}^n z_i^3.
\]
Therefore,
\[
|D^3f(\alpha)[u,u,u]|
\leq
\frac2n\left(\sum_{i=1}^n z_i^2\right)^{3/2}
=2\sqrt n\bigl(D^2f(\alpha)[u,u]\bigr)^{3/2}.
\]
Multiplication by $n$ changes the self-concordance constant from
$2\sqrt n$ to $2$, and the displayed Hessian vanishes in direction $u$
precisely when $Ku=0$.
\end{proof}

\begin{proof}[Proof of Theorem~\ref{thm: LowBd}]

For \(1\leq k\leq \mathfrak{m}+1\), define the symmetric
moment-difference tensor
\[
  T_k(\alpha)
  :=
  \int x^{\otimes k}\,
  d\bigl(\mathrm{P}_M^\alpha-\rho\bigr)(x).
\]
Matching all coordinate moments of total degree \(k\) is equivalent to
\(T_k(\alpha)=0\). By maximality of \(\mathfrak{m}\), no
\(\alpha\in\Delta_{M-1}\) makes
\[
  T_1(\alpha),\ldots,T_{\mathfrak{m}+1}(\alpha)
\]
all zero. These tensors depend continuously on \(\alpha\), and the
simplex is compact. Consequently,
\begin{equation}
  \delta
  :=
  \min_{\alpha\in\Delta_{M-1}}
  \left\{
    \sum_{k=1}^{\mathfrak{m}+1}
    \|T_k(\alpha)\|_F^2
  \right\}^{1/2}
  >0.
  \label{eq:uniform-moment-gap}
\end{equation}

Let
\[
  \Delta_\alpha(u)
  :=
  \mathcal{F}\bigl[\mathrm{P}_M^\alpha\bigr](u)
  -
  \mathcal{F}[\rho](u),
\]
where, for a finite measure \(\mu\),
\[
  \mathcal{F}[\mu](u)
  :=
  \int e^{i\langle u,x\rangle}\,d\mu(x).
\]
The supports of \(\rho\) and of every \(\mathrm{P}_M^\alpha\) lie in
one fixed ball. Taylor's formula for \(e^{i\langle u,x\rangle}\)
therefore gives, uniformly in \(\alpha\) and for \(\|u\|\leq 1\),
\begin{equation}
  \Delta_\alpha(u)
  =
  \sum_{k=1}^{\mathfrak{m}+1}
  \frac{i^k}{k!}
  \left\langle T_k(\alpha),u^{\otimes k}\right\rangle
  +
  E_\alpha(u),
  \qquad
  |E_\alpha(u)|
  \leq
  C_1\|u\|^{\mathfrak{m}+2}.
  \label{eq:uniform-characteristic-expansion}
\end{equation}

Let $B_1:=\{z\in\mathbb R^d:\|z\|\leq1\}$. The map
\[
  (A_1,\ldots,A_{\mathfrak{m}+1})
  \longmapsto
  \sum_{k=1}^{\mathfrak{m}+1}
  \frac{i^k}{k!}
  \left\langle A_k,z^{\otimes k}\right\rangle
\]
is injective on
\(
  \bigoplus_{k=1}^{\mathfrak{m}+1}
  \operatorname{Sym}^k(\mathbb{R}^d).
\)
Hence the \(L^2(B_1)\)-norm of the resulting polynomial is equivalent
to the Euclidean norm of its coefficients. It follows from
\eqref{eq:uniform-moment-gap} that, for every \(\sigma\geq 1\),
\begin{align}
  &\left\|
    \sum_{k=1}^{\mathfrak{m}+1}
    \frac{i^k\sigma^{-k}}{k!}
    \left\langle T_k(\alpha),z^{\otimes k}\right\rangle
  \right\|_{L^2(B_1)}
\geq
  c_1
  \left\{
    \sum_{k=1}^{\mathfrak{m}+1}
    \sigma^{-2k}\|T_k(\alpha)\|_F^2
  \right\}^{1/2}
  \geq
  c_1\delta\,
  \sigma^{-(\mathfrak{m}+1)}.
  \label{eq:polynomial-norm-lower-bound}
\end{align}

Set \(u=z/\sigma\). By
\eqref{eq:uniform-characteristic-expansion}, the \(L^2(B_1)\)-norm of
the remainder satisfies
\[
  \bigl\|E_\alpha(z/\sigma)\bigr\|_{L^2(B_1)}
  \leq
  C_2\sigma^{-(\mathfrak{m}+2)}
\]
uniformly in \(\alpha\). Therefore,
\begin{align*}
  \bigl\|\Delta_\alpha(z/\sigma)\bigr\|_{L^2(B_1)}
  &\geq
  c_1\delta\,\sigma^{-(\mathfrak{m}+1)}
  -
  C_2\sigma^{-(\mathfrak{m}+2)}.
\end{align*}
Choosing
\(
  \sigma_0
  \geq
  \max\left\{
    1,\frac{2C_2}{c_1\delta}
  \right\}\), it holds , for every \(\sigma\geq\sigma_0\), uniformly in
\(\alpha\in\Delta_{M-1}\), that
\[
  \bigl\|\Delta_\alpha(z/\sigma)\bigr\|_{L^2(B_1)}
  \geq
  \frac{c_1\delta}{2}
  \sigma^{-(\mathfrak{m}+1)}.
\]
After squaring, this gives
\begin{equation}
  \int_{B_1}
  \left|\Delta_\alpha(z/\sigma)\right|^2\,dz
  \geq
  c_2\sigma^{-2(\mathfrak{m}+1)}.
  \label{eq:characteristic-lower-bound}
\end{equation}

Since
\[
  \mathcal{F}[p_\sigma](u)
  =
  e^{-\sigma^2\|u\|^2/2},
\]
Plancherel's identity and the change of variables \(z=\sigma u\)
yield
\begin{align}
  \left\|
    p_\sigma*\mathrm{P}_M^\alpha-p_\sigma*\rho
  \right\|_2^2
  &=
  c_d
  \int_{\mathbb{R}^d}
  e^{-\sigma^2\|u\|^2}
  |\Delta_\alpha(u)|^2\,du
  \notag\\
  &=
  c_d\sigma^{-d}
  \int_{\mathbb{R}^d}
  e^{-\|z\|^2}
  |\Delta_\alpha(z/\sigma)|^2\,dz
  \notag\\
  &\geq
  c_de^{-1}\sigma^{-d}
  \int_{B_1}
  |\Delta_\alpha(z/\sigma)|^2\,dz
  \notag\\
  &\geq
  c_3
  \sigma^{-d-2(\mathfrak{m}+1)}.
  \label{eq:smoothed-l2-lower-bound}
\end{align}

Both $p_\sigma*\rho$ and $p_\sigma*\mathrm P_M^\alpha$ are bounded above by
$B_\sigma:=(2\pi\sigma^2)^{-d/2}$. The KL--Hellinger inequality gives
\[
\operatorname{KL}(p_\sigma*\rho\|p_\sigma*\mathrm P_M^\alpha)
\geq\frac1{4B_\sigma}
\|(p_\sigma*\rho)-(p_\sigma*\mathrm P_M^\alpha)\|_2^2.
\]
Combining this with~\eqref{eq:smoothed-l2-lower-bound} gives, uniformly in
$\alpha$,
\(
\operatorname{KL}(p_\sigma*\rho\|p_\sigma*\mathrm P_M^\alpha)
\geq c\,\sigma^{-2(\mathfrak m+1)}.
\)
\end{proof}

\begin{proof}[Proof of Proposition~\ref{prop: Replica}]
Write
\[
g_\alpha(y):=\log\{(p_\sigma*\mathrm P_M^\alpha)(y)\}.
\]
If $Y=x_k+\sigma Z$, write $\mathbb E_k$ for expectation under its law.
Finiteness of the candidate grid gives, uniformly in $\alpha,\beta$ and $k$,
\[
|g_\alpha(Y)|\leq C(1+\|Z\|^2),
\qquad
|g_\alpha(Y)-g_\beta(Y)|
\leq Ce^{C\|Z\|}\|\alpha-\beta\|_1.
\]
Indeed, the mixture density is a weighted average of the component densities,
so it lies between their minimum and maximum; this gives the first bound. For
the second, the mean-value inequality for the logarithm gives
\[
|g_\alpha(Y)-g_\beta(Y)|
\leq
\frac{\max_m p_\sigma(Y\mid x_m)}{\min_m p_\sigma(Y\mid x_m)}
\|\alpha-\beta\|_1.
\]
For Gaussian components with the same variance, the ratio on the right is
bounded by $Ce^{C\|Z\|}$. Both bounds are square integrable. The standard
maximal bound for this fixed-dimensional Lipschitz class therefore gives,
conditionally on the labels,
\[
\sup_{\alpha\in\Delta_{M-1}}
\left|
\frac1{N_k}\sum_{i:z_i=k}g_\alpha(Y_i)
-\mathbb E_k g_\alpha(Y)
\right|=O_p(N_k^{-1/2}).
\]
Moreover,
\[
\mathbb E_k g_\alpha(Y)
=-\frac d2\{1+\log(2\pi\sigma^2)\}
-\operatorname{KL}\!\left(p_\sigma(\cdot\mid x_k)\|
p_\sigma*\mathrm P_M^\alpha\right).
\]
Summing over the fixed number of clusters and using
$N_k/n\to\pi_k>0$ proves~\eqref{eq:clusterwise-kl-representation}.
\end{proof}

\begin{proof}[Proof of Proposition~\ref{prop: Prox}]
Up to an $\alpha$-independent constant,
\[
Q(\alpha\mid\alpha^{(k)})
=\sum_j\alpha_j^{(k+1)}\log\alpha_j,
\]
whose unique simplex maximizer is $\alpha^{(k+1)}$. Expanding the KL
divergence in $\mathcal D$ gives~\eqref{eq: Prox}.
\end{proof}

\begin{proof}[Proof of Proposition~\ref{prop: EqReg}]
Put $L:=T_\sigma+1$, $G_\sigma(\alpha):=\Phi_n(\alpha)-\alpha$, and
\[
\delta_\sigma:=\max_{i,j,\ell}
\left|\log\frac{K_{ij}}{K_{i\ell}}\right|,
\qquad
\varepsilon_\sigma:=e^{\delta_\sigma}-1.
\]
The Gaussian likelihood gives
\[
\log\frac{K_{ij}}{K_{i\ell}}
=\frac{\langle Z_i,x_j-x_\ell\rangle}{\sigma}
-\frac{\|x_j-\mu_i\|^2-\|x_\ell-\mu_i\|^2}{2\sigma^2}.
\]
Let
\[
W:=\max_{\substack{i,\,j<\ell\\x_j\neq x_\ell}}
\frac{|\langle Z_i,x_j-x_\ell\rangle|}{\|x_j-x_\ell\|}.
\]
The preceding identity and polarization give
\[
\delta_\sigma
\leq \frac{R_x}{\sigma}W+\frac{A_xR_x}{\sigma^2}
\leq a_\sigma\max(W,1).
\]
Moreover,
\[
\mathbb P(W>t)\leq nM(M-1)e^{-t^2/2}.
\]
Since the law of $W$ does not depend on $\sigma$, this bound is uniform for
the continuum limit $\sigma\to\infty$, and therefore
$\delta_\sigma=O_p(a_\sigma)$. In particular,
$\mathbb P(\delta_\sigma\leq1)\to1$, and on this event
$\varepsilon_\sigma\leq(e-1)\delta_\sigma$. Hence
$\varepsilon_\sigma=O_p(a_\sigma)$.
Since $s_i(\alpha)$ is a convex combination of the entries in row $i$,
\[
e^{-\delta_\sigma}\leq\frac{K_{ij}}{s_i(\alpha)}
\leq e^{\delta_\sigma},
\qquad
\left|\frac{K_{ij}}{s_i(\alpha)}-1\right|
\leq\varepsilon_\sigma.
\]
It follows that
\[
[G_\sigma(\alpha)]_j
=\frac{\alpha_j}{n}\sum_i
\left(\frac{K_{ij}}{s_i(\alpha)}-1\right),
\qquad
\sup_{\alpha\in\Delta_{M-1}}\|G_\sigma(\alpha)\|_2
\leq\varepsilon_\sigma.
\]
For a tangent vector $v\in\mathcal T$, direct differentiation and
$\sum_m v_m=0$ give
\[
[\nabla G_\sigma(\alpha)v]_j
=\frac1n\sum_i\left[
\left(\frac{K_{ij}}{s_i(\alpha)}-1\right)v_j
-r_j(y_i;\alpha)\sum_m
\left(\frac{K_{im}}{s_i(\alpha)}-1\right)v_m
\right].
\]
Since each $r(y_i;\alpha)$ is a probability vector, the preceding row-wise
bound implies
\[
\|\nabla G_\sigma(\alpha)v\|_2
\leq(1+\sqrt M)\varepsilon_\sigma\|v\|_2.
\]
Integrating along the segment joining $\alpha$ and $\beta$ therefore gives
\[
\|G_\sigma(\alpha)-G_\sigma(\beta)\|_2
\leq C_M\varepsilon_\sigma\|\alpha-\beta\|_2,
\qquad C_M:=1+\sqrt M.
\]
The bound
$\sup_{\alpha\in\Delta_{M-1}}\|G_\sigma(\alpha)\|_2
\leq\varepsilon_\sigma$ gives
$\|\alpha^{(t)}-u\|_2\leq t\varepsilon_\sigma$.
Moreover, the EM recursion yields
\[
\alpha^{(L)}-u-LG_\sigma(u)
=\sum_{t=0}^{L-1}
\bigl\{G_\sigma(\alpha^{(t)})-G_\sigma(u)\bigr\},
\]
and
\[
\|\alpha^{(L)}-u-LG_\sigma(u)\|_2
\leq C_M\varepsilon_\sigma
\sum_{t=0}^{L-1}t\varepsilon_\sigma
\leq \frac{C_M}{2}L^2\varepsilon_\sigma^2.
\]

Write $\widetilde\alpha:=\widetilde\alpha_{\sigma,T_\sigma}$. Since
$\operatorname{KL}(u\|\alpha)$ diverges at the boundary and is strictly
convex in the relative interior, the penalized objective has a unique interior
maximizer. Its stationarity equations are
\[
[\nabla F(\widetilde\alpha)]_j
+\frac{1}{LM\widetilde\alpha_j}=\nu.
\]
Multiplying by $\widetilde\alpha_j$ and summing gives
$\nu=1+L^{-1}$. Multiplying the $j$th stationarity equation by
$\widetilde\alpha_j$ and using
$[\Phi_n(\alpha)]_j=\alpha_j[\nabla F(\alpha)]_j$ gives
\[
\bigl[\Phi_n(\widetilde\alpha)\bigr]_j+\frac1{LM}
=(1+L^{-1})\widetilde\alpha_j,
\quad \text{ as well as } \quad 
L[G_\sigma(\widetilde\alpha)]_j
=\widetilde\alpha_j-\frac1M.
\]
Thus $\widetilde\alpha-u=LG_\sigma(\widetilde\alpha)$, and consequently
\[
\|\widetilde\alpha-u\|_2\leq L\varepsilon_\sigma,
\qquad
\|\widetilde\alpha-u-LG_\sigma(u)\|_2
\leq C_ML^2\varepsilon_\sigma^2.
\]
Together with the preceding bound on
$\|\alpha^{(L)}-u-LG_\sigma(u)\|_2$, this shows that the numerator in
\eqref{eq:empirical-em-regularization-rate} is $O_p(L^2a_\sigma^2)$.

It remains to control the denominator. With
$\bar x:=M^{-1}\sum_jx_j$ and $\bar Z:=n^{-1}\sum_iZ_i$, a first-order
softmax expansion at the uniform weights gives
\[
[G_\sigma(u)]_j
=\frac{\langle\bar Z,x_j-\bar x\rangle}{M\sigma}
+O_p(\sigma^{-2}).
\]
Since the grid is fixed, $a_\sigma\sim R_x/\sigma$. The bound on
$\alpha^{(L)}-u-LG_\sigma(u)$ and $La_\sigma\to0$ therefore give
\[
\frac{\alpha^{(L)}-u}{La_\sigma}
=\frac{G_\sigma(u)}{a_\sigma}+O_p(La_\sigma)
=\frac1{MR_x}
\bigl(\langle\bar Z,x_j-\bar x\rangle\bigr)_{j=1}^M+o_p(1).
\]
Because $R_x>0$, the map
$z\mapsto(\langle z,x_j-\bar x\rangle)_{j=1}^M$ has positive rank, so the
Gaussian vector on the right is nonzero almost surely. Consequently,
$La_\sigma/\|\alpha^{(L)}-u\|_2=O_p(1)$.
Combining this with the bound on the numerator gives
\[
\frac{\|\alpha^{(L)}-\widetilde\alpha\|_2}
{\|\alpha^{(L)}-u\|_2}
=\frac{\|\alpha^{(L)}-\widetilde\alpha\|_2}
{L^2a_\sigma^2}
\frac{La_\sigma}{\|\alpha^{(L)}-u\|_2}\,La_\sigma
=O_p(La_\sigma)=o_p(1). \qedhere
\]
\end{proof}

\begin{proof}[Proof of Proposition~\ref{prop: SelfConcGrad}]
Let $f=-F$, $\phi(s)=f(\beta+sd)$, and $q(s)=\sqrt{\phi''(s)}$.
If $q(0)=0$, then $Kd=0$ and the claim is immediate. Otherwise,
Proposition~\ref{prop: Self-Conc} gives
$|q'(s)|\leq\sqrt n\,q(s)^2$, whence
\[
\frac{q(0)}{1+s\tau}\leq q(s)\leq\frac{q(0)}{1-s\tau}.
\]
Squaring and integrating over $s\in[0,1]$ gives the two bounds in
\eqref{eq:self-concordant-gradient-bound}.
\end{proof}

\begin{proof}[Proof of Theorem~\ref{thm: NoiseMisspecification}]
Start by locally defining
\[
\delta_N:=\frac1{\sigma^2}
\max_{i,m,\ell}|\langle x_m-x_\ell,\eta_i\rangle|.
\]
For some $C_\eta$ independent of $\alpha$, the perturbed objective is
\[
\widetilde F(\alpha)=C_\eta+\frac1n\sum_i
\log\left(\sum_m\alpha_mK_{im}
\exp\!\left\{\frac{\langle x_m,\eta_i\rangle}{\sigma^2}\right\}\right).
\]
For each $i$, the exponents in the sum above differ by at most $\delta_N$.
Consequently, coordinatewise, the corresponding summands of the two
gradients have a ratio between $e^{-\delta_N}$ and $e^{\delta_N}$; the same
is true after averaging over $i$. The KKT conditions at $\widetilde\alpha$
then give
\[
\|\nabla F(\widetilde\alpha)
-\nabla\widetilde F(\widetilde\alpha)\|_2
\leq\sqrt M(e^{\delta_N}-1).
\]

Under~\eqref{eq:tangent-injectivity}, continuity and compactness give
$\mu_N>0$. With $d=\widetilde\alpha-\widehat\alpha$, strong concavity and
the optimality conditions give
\[
\mu_N\|d\|^2
\leq\langle\nabla F(\widehat\alpha)
-\nabla F(\widetilde\alpha),d\rangle
\leq\langle\nabla\widetilde F(\widetilde\alpha)
-\nabla F(\widetilde\alpha),d\rangle.
\]
Therefore
\[
\|d\|\leq\frac{\sqrt M}{\mu_N}(e^{\delta_N}-1).
\]
At the uniform vector $u$, restrict $-\nabla^2F(u)$ to the
$(M-1)$-dimensional space $\mathcal T$. The sum of its eigenvalues is
$M^2n^{-1}\sum_i\|r(y_i;u)-u\|^2$. Since $\mu_N$ is no larger than any of
these eigenvalues and $r(y_i;u)$ is a probability vector,
\[
(M-1)\mu_N
\leq M^2\frac1n\sum_i\|r(y_i;u)-u\|^2
\leq M(M-1).
\]
Here the last inequality uses
$\|r(y_i;u)-u\|^2=\|r(y_i;u)\|^2-1/M\leq1-1/M$.
Thus $\mu_N\leq M$. If $\delta_N\leq1$, the preceding bound and
$e^{\delta_N}-1\leq2\delta_N$ give
\[
\|d\|\leq\frac{2\sqrt M}{\mu_N}\delta_N
\leq\frac{2M}{\mu_N}\delta_N.
\]
If $\delta_N>1$, both maximizers lie in the simplex, whose diameter is
$\sqrt2$. Using also $\mu_N\leq M$ gives
\[
\|d\|\leq\sqrt2\leq\frac{2M}{\mu_N}\delta_N.
\]
Hence the last bound holds in both cases.
Finally,
\[
\mathbb E\langle x_m-x_\ell,\eta_i\rangle^2
\leq R_x^2\varepsilon_N^2.
\]
Chebyshev's inequality and a finite union bound give
$\delta_N=O_p(R_x\varepsilon_N/\sigma^2)$, which proves
\eqref{eq:misspecified-noise-argmax}.
\end{proof}

\begin{proof}[Proof of Theorem~\ref{thm: CLT}]
Write $m_\theta(y)=\log f_\theta(y)$ and
$\mathbb P_nh=n^{-1}\sum_{i=1}^nh(Y_i)$. Write
$Y=X+\sigma W$, where $X\sim\rho$ and $W\sim\mathcal N(0,I_d)$ are
independent. Bounded support of $\rho$ gives $\mathbb E\|Y\|^2<\infty$,
and
\[
\max_m|\log p_\sigma(Y\mid x_m)|\leq C(1+\|Y\|^2).
\]
Since $|m_\theta|$ is bounded by the left-hand side, compactness of
$\Theta$ and continuity give
\[
\sup_{\theta\in\Theta}
|\mathbb P_nm_\theta-\mathbb E m_\theta|\longrightarrow0
\quad\text{in probability}.
\]

Choose $k$ with $\alpha_k^*>0$. Since
$f_{\theta^*}\geq\alpha_k^*p_\sigma(\cdot\mid x_k)$,
\[
\|s_{\theta^*}(y)\|
\leq C\sum_m\frac{p_\sigma(y\mid x_m)}{p_\sigma(y\mid x_k)}.
\]
Each ratio on the right is the exponential of an affine function of $y$.
The random vector $Y=X+\sigma W$ has finite exponential moments in every
linear direction, so $I_{\theta^*}$ is finite. Moreover, for
$v\in\mathbb R^{M-1}$,
\[
v^\top I_{\theta^*}v
=\mathbb E\left[
\frac{\{\sum_{j<M}v_j(p_\sigma(Y\mid x_j)-p_\sigma(Y\mid x_M))\}^2}
{f_{\theta^*}(Y)^2}
\right].
\]
The density $p_\sigma*\rho$ is positive everywhere, and distinct Gaussian
translates are linearly independent, which can be seen  by taking
Fourier transforms. Hence this display can vanish only for $v=0$, proving
that $I_{\theta^*}$ is positive definite.

Linearity of $f_\theta$ gives the exact identity
\[
m_{\theta^*+h}(y)-m_{\theta^*}(y)
=\log\{1+h^\top s_{\theta^*}(y)\}
\]
for every feasible $h$.
If $\theta^*+v$ were another population maximizer, concavity would make the
population criterion constant on the segment joining the two maximizers,
whereas this identity and the finite second score moment give the strictly
negative second directional derivative $-v^\top I_{\theta^*}v$. Thus the population
maximizer is unique and the uniform law above gives
$\hat\theta_n\to\theta^*$ in probability.

Put $S_i=s_{\theta^*}(Y_i)$ and
\[
Z_n:=\sqrt n(\mathbb P_n-P_0)s_{\theta^*}.
\]
Finite second moments give
\[
\frac{\max_{i\leq n}\|S_i\|}{\sqrt n}=o_p(1),
\qquad
\frac1n\sum_{i=1}^nS_iS_i^\top\longrightarrow I_{\theta^*},
\]
and the multivariate central limit theorem gives
$Z_n\rightsquigarrow Z\sim
\mathcal N_{M-1}(0,I_{\theta^*}-bb^\top)$. Uniformly for $u$ in a fixed
compact set with $\theta^*+u/\sqrt n\in\Theta$, Taylor expansion in the
exact identity now gives
\begin{align*}
&n\big(\mathbb P_nm_{\theta^*+u/\sqrt n}
-\mathbb P_nm_{\theta^*}\big)\\
&\qquad=\sqrt n\,b^\top u+Z_n^\top u
-\frac12u^\top I_{\theta^*}u+o_p(1).
\end{align*}

Population optimality gives $b^\top u\leq0$ in every feasible local
direction. If $\lambda$ is the smallest eigenvalue of $I_{\theta^*}$, the
last display is therefore at most
$R\|Z_n\|-\lambda R^2/2+o_p(1)$ on the feasible sphere $\|u\|=R$.
Consistency, concavity, and then $R\to\infty$ give
$\sqrt n(\hat\theta_n-\theta^*)=O_p(1)$.

Because $\Theta$ is defined by linear inequalities, its root-$n$ local
feasible directions eventually satisfy exactly the first two conditions in
the definition of $\mathcal C$ on every compact set. Along a convergent
feasible sequence, the local criterion tends to $-\infty$ if $b^\top u<0$;
if $u\in\mathcal C$, its upper limit is at most
\[
Z^\top u-\frac12u^\top I_{\theta^*}u.
\]
Conversely, every $u\in\mathcal C$ is eventually feasible and attains this
limit. Tightness and the argmax argument for these concave criteria give
\eqref{eq:constrained-mle-limit}; positive definiteness of $I_{\theta^*}$
gives uniqueness of the limiting maximizer.

If $\theta^*$ is interior, first-order optimality gives $b=0$ and
$\mathcal C=\mathbb R^{M-1}$. The limit is then $I_{\theta^*}^{-1}Z$, with
covariance $I_{\theta^*}^{-1}$.

In the case where the model is correctly specified,
\[
b_j=\int\{p_\sigma(y\mid x_j)-p_\sigma(y\mid x_M)\}\,\diff y=0,
\]
so $\mathcal C$ is the tangent cone and
$Z\sim\mathcal N_{M-1}(0,I_{\theta^*})$. Moreover,
\[
\frac12u^\top I_{\theta^*}u-Z^\top u
=\frac12\|I_{\theta^*}^{1/2}u-I_{\theta^*}^{-1/2}Z\|^2
-\frac12Z^\top I_{\theta^*}^{-1}Z.
\]
Therefore, after multiplication by $I_{\theta^*}^{1/2}$, the minimizer is
the Euclidean projection of
$I_{\theta^*}^{-1/2}Z\sim\mathcal N_{M-1}(0,I_{M-1})$ onto
$I_{\theta^*}^{1/2}\mathcal C$.
\end{proof}

\section{Additional Stopping Criterion on IgG example}
\label{sec: igg}
 We now consider a different benchmark example to illustrate the
stopping-criterion issue discussed in Section~\ref{sec: EMSect}. The simulation is from the 100 structures of the IgG-1d example from the benchmark \cite{jeon2024cryobench}, but simulated with conformations taking a multimodal distribution as considered in \cite{evans2026counting} (Figure~\ref{fig: igg}).  
Unlike the Hsp90 example above, the EM algorithm overfits very quickly, within 1000 iterations. Here, the threshold criteria at $10^{-4}$ is too strict, unlike the previous example in Figure~\ref{fig: EM_with_thresholds}.
The iteration at which overfitting begins varies considerably between examples, and understanding this stopping time is an important direction for future work.
Figures~\ref{fig: EM_with_thresholds_igg} and
\ref{fig: igg_gaps_comparison} report the corresponding stopping diagnostics.
\begin{figure}
\begin{center}
\includegraphics[width=0.75\textwidth]{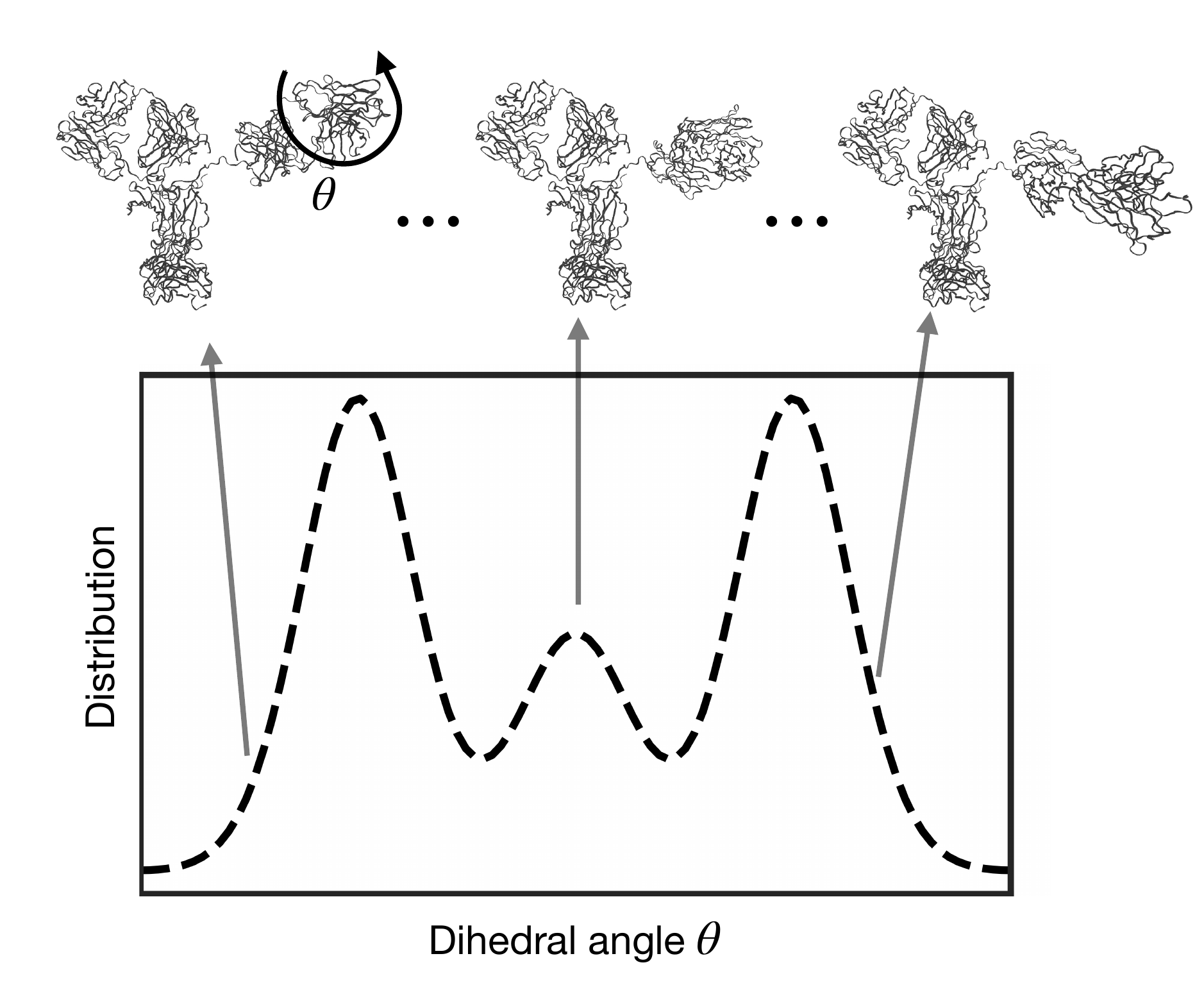}
\caption{\label{fig: igg}
Example structures of the IgG molecule used in \ref{sec: igg}, with diagram of the distribution of structures along the dihedral angle $\theta,$ as considered in \cite{evans2026counting}.
}
\end{center}
\end{figure}

\begin{figure}
\begin{center}
\includegraphics[width=0.99\textwidth]{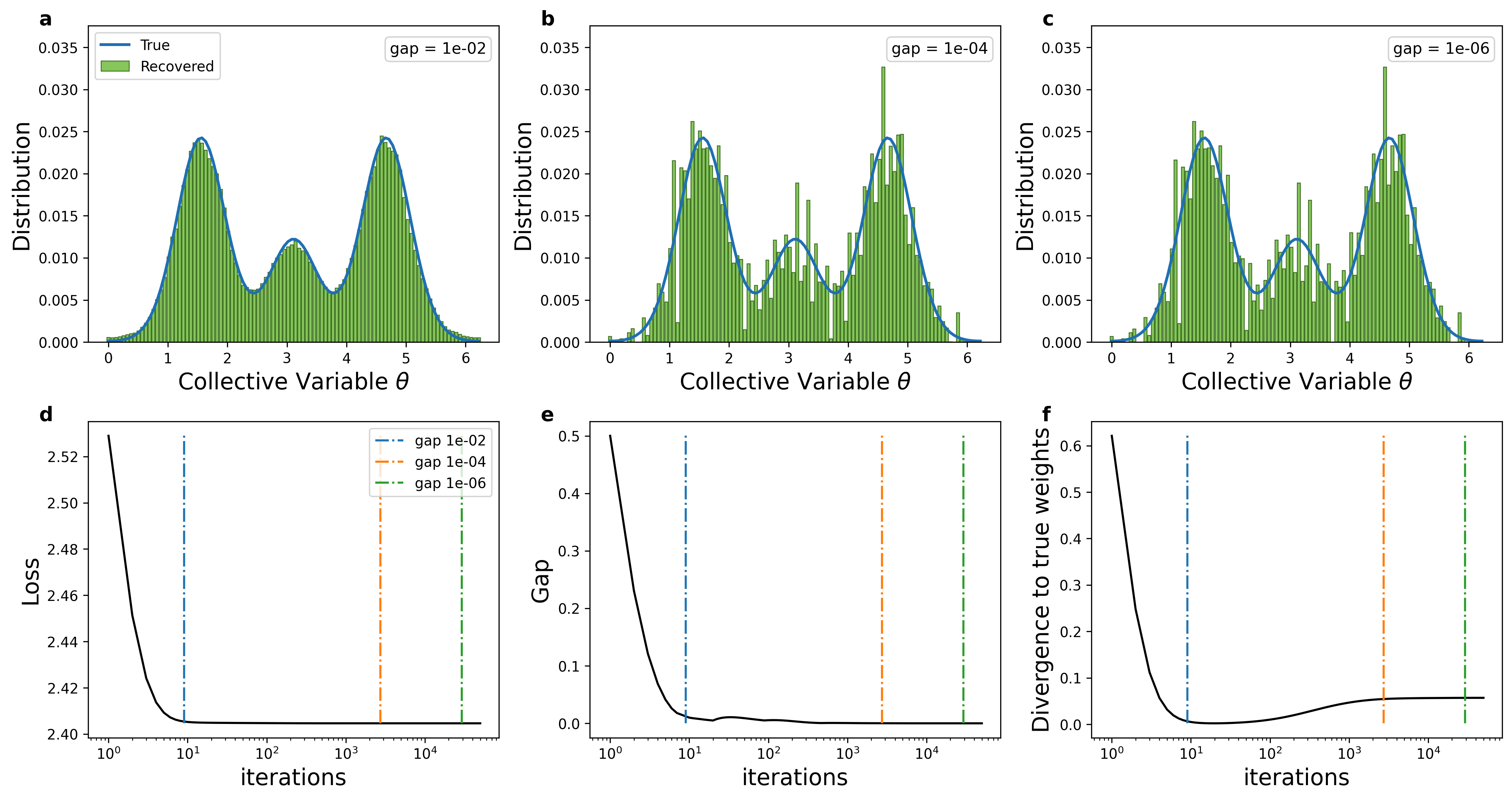}
\caption{\label{fig: EM_with_thresholds_igg}
Comparison of various thresholds for the gap $||\nabla F(\alpha_t) - 1 ||_{\infty}$ on the IgG example from \ref{sec: igg}.
{\bf a.-\bf c.} Estimated weights at  thresholds $10^{-2}, 10^{-4}, 10^{-6}$ respectively. {\bf d.-\bf f. } Loss, gap, and divergence to true weights  respectively, plotted against iteration number. The blue, orange, and green lines denote the iteration index where the $10^{-2}, 10^{-4}, 10^{-6}$ gap thresholds are reached.
}
\end{center}
\end{figure}

\begin{figure}
\begin{center}
\includegraphics[width=0.99\textwidth]{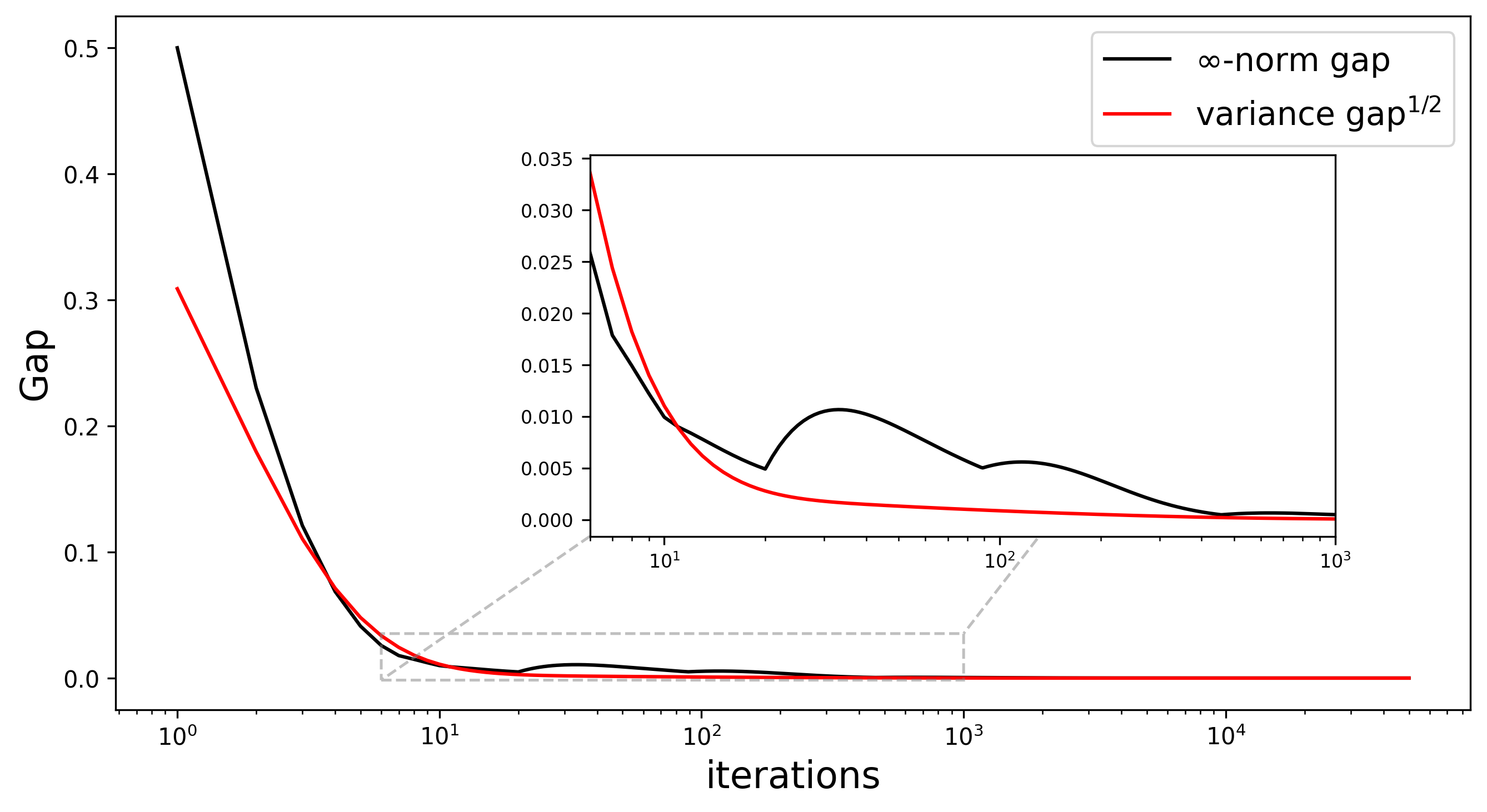}
\caption{\label{fig: igg_gaps_comparison}
 Comparison of the ``variance gap'' from Equation \eqref{eq: stopCrit} (red) against the $\infty-$norm gap $||\nabla F(\alpha^{(t)}) - 1_{M}||_{\infty},$ for the IgG example considered in \ref{sec: igg}, with the same data as figure~\ref{fig: EM_with_thresholds_igg}.
}
\end{center}
\end{figure}

\end{document}